\pdfoutput=1
\documentclass[a4paper,11pt]{article}
\usepackage{fullpage}
\usepackage[T1]{fontenc}
\usepackage[utf8]{inputenc}
\usepackage{latexsym,amssymb,amsmath,amsthm,graphicx,enumerate,mathtools}
\usepackage{cite,braket}
\usepackage{wasysym}
\usepackage{thmtools}
\usepackage{thm-restate}
\usepackage[linesnumbered,ruled,vlined]{algorithm2e}
\usepackage{algpseudocode}
\usepackage{algorithmicx}
\usepackage{algcompatible}
\usepackage[hidelinks]{hyperref}

\theoremstyle{plain}
\newtheorem{theorem}{Theorem}

\newtheorem{lemma}[theorem]{Lemma}
\newtheorem{corollary}[theorem]{Corollary}

\theoremstyle{definition}
\newtheorem{definition}[theorem]{Definition}
\newtheorem{example}[theorem]{Example}
\newtheorem{remark}[theorem]{Remark}

\renewcommand{\S}{\mathcal{S}}
\newcommand{\R}{\mathcal{R}}

\newcommand{\C}{\mathcal{C}}
\providecommand{\cl}{}
\renewcommand{\cl}{\operatorname{cl}}
\newcommand{\Blow}{B_{\mathrm{low}}}
\newcommand{\Bhigh}{B_{\mathrm{high}}}
\newcommand{\gap}{{\sf gap}}
\newcommand{\best}{\Delta}

\title{Stable Matchings with Minimum Utility Gap}
\author{
Yao Sheng\thanks{Department of Mathematical and Computing Science, School of Computing, Institute of Science Tokyo, Tokyo 152-8552, Japan. Email: \texttt{sheng.y.2ab1@m.isct.ac.jp}}
\and
Yu Yokoi\thanks{Department of Mathematical and Computing Science, School of Computing, Institute of Science Tokyo, Tokyo 152-8552, Japan. Email: \texttt{yokoi@comp.isct.ac.jp}}
}
\date{}

\begin{document}
\maketitle
\begin{abstract}
We introduce the {\em Stable Matching Problem with Minimum Utility Gap}, which seeks a stable matching in which the utilities received by individual agents are as balanced as possible. Our framework can handle many-to-many matchings and general utility functions on partner sets that are consistent with the agents' preferences. We consider two measures for comparing agents' utilities: the difference between the maximum and minimum utilities, and their ratio.

We provide a polynomial-time algorithm for both versions. The algorithm exploits the rotation-poset representation of the set of stable matchings and, in particular, the fact that the rotations affecting each agent form a chain in this poset. To position our result, we also clarify its relation to existing frameworks: we show that our objectives are not captured by the recent minimum-cut representability framework, while identifying a special case that admits a submodular function minimization interpretation.
\end{abstract}

\medskip

\section{Introduction}
\label{sec:introduction}

Since the seminal work of Gale and Shapley \cite{GaleShapley1962}, the theory of stable matching has grown into a rich research area, with deep mathematical structure and broad applications; see, e.g., \cite{RothSotomayor1990,Manlove2013}.
A matching in a two-sided market is {\em stable} if there is no unmatched pair of agents on opposite sides who can mutually benefit by deviating from their current assignments.
The Gale--Shapley deferred-acceptance algorithm guarantees the existence of a stable matching.
Throughout this paper, we denote the two sides of the market by $S$ and $L$ (students and laboratories), and call each element of $S\cup L$ an agent.

Even in the classical one-to-one setting, known as the {\em stable marriage} problem, the number of stable matchings can be exponential in the input size \cite{IrvingLeather1986,GusfieldIrving1989}.
The stable matching returned by the Gale--Shapley algorithm is optimal for one side, but pessimal for the other side \cite{Knuth1976Mariages,GusfieldIrving1989}.
It is then natural to ask whether we can find a stable matching that is quantitatively optimal with respect to a given objective function.

Indeed, various stable matching optimization problems have been studied since the 1980s, especially for the marriage setting.
The {\em egalitarian stable matching} problem asks for a stable matching $M$ minimizing $\sum_{a\in S\cup L} \operatorname{rank}_a(M(a))$, where $\operatorname{rank}_a(M(a))\in\{1,2,\dots\}$ is the rank of the assigned partner of agent $a$; a smaller rank is better \cite{IrvingLeatherGusfield1987,GusfieldIrving1989}.
The {\em minimum-regret stable matching} problem asks for a stable matching minimizing $\max_{a\in S\cup L} \operatorname{rank}_a(M(a))$ \cite{Gusfield1987,GusfieldIrving1989}.
Both of these problems are tractable.
More generally, Irving--Leather--Gusfield \cite{IrvingLeatherGusfield1987} showed that, for any weight function on acceptable pairs, a stable matching of minimum or maximum total weight can be found efficiently.\footnote{The minimum-regret stable matching problem can be reduced to the minimum-weight case by using exponentially separated weights.}
This tractability is based on the distributive lattice structure of stable matchings and its compact representation by a rotation poset, which allows the problem to be reduced to a minimum-cut problem \cite{GusfieldIrving1989}.
Recent work of Faenza--Foussoul--He \cite{FaenzaFoussoulHe2025} gives a characterization of stable matching optimization problems, not necessarily linear, that can be represented as a minimum-cut problem.

On the other hand, not all natural objective functions over stable matchings are tractable.
For example, the {\em sex-equal stable marriage} problem asks for a stable matching $M$ minimizing $|\sum_{s\in S}\operatorname{rank}_s(M(s))-\sum_{\ell\in L}\operatorname{rank}_\ell(M(\ell))|$, and is NP-hard \cite{GusfieldIrving1989,Kato1993}.
The {\em balanced stable marriage} problem asks for a stable matching minimizing the maximum of $\sum_{s\in S}\operatorname{rank}_s(M(s))$ and $\sum_{\ell\in L}\operatorname{rank}_\ell(M(\ell))$, and is also NP-hard \cite{Feder1995,GuptaRoySaurabhZehavi2019}.
These objectives are motivated by fairness between the two sides of the market.

In this paper, we study a different type of fairness objective, where we compare the utilities received by individual agents.

\subsection*{Our Contribution}

We propose a new stable matching optimization problem, which we call the {\em Stable Matching Problem with Minimum Utility Gap}.
Like the sex-equal stable matching problem, our objective is based on a gap; however, instead of comparing aggregate quantities of the two sides, we compare utilities received by individual agents.
For example, in the one-to-one setting, our framework contains, as a special case, the problem of minimizing
$\max_{a,a'\in S\cup L}
\left|
\operatorname{rank}_a(M(a))-\operatorname{rank}_{a'}(M(a'))
\right|$.
We formulate the problem in a more general many-to-many setting, where each agent $a\in S\cup L$ has a capacity $q_a$ in addition to a strict preference list over agents on the opposite side.
We are also given a nonempty subset $A\subseteq S\cup L$ of agents whose utilities we care about, and for each $a\in A$, a value function $v_a$ that assigns a value to each possible partner set of $a$.
We assume that each $v_a$ is consistent with the preference of $a$, in the sense that a better partner set with respect to the preference has a larger value; see condition~($\star$) in Section~\ref{sec:problem-setting} for the precise definition.
Our first objective is the difference version: we aim to find a stable matching $M$ minimizing
\[
\max_{a\in A} v_a(M(a))-\min_{a\in A}v_a(M(a)),
\]
where $M(a)$ denotes the set of partners assigned to $a$ in $M$.
We also consider the ratio version, where, assuming positivity of all relevant function values, we aim to minimize
\[
\frac{\max_{a\in A} v_a(M(a))}{\min_{a\in A}v_a(M(a))}.
\]
We show that both the difference and ratio versions can be solved efficiently by the same algorithmic framework, with only a slight modification.
The rank-gap objective mentioned above is obtained from the difference version by taking $A=S\cup L$ and $v_a(M(a))=-\operatorname{rank}_a(M(a))$.

We also show in Appendix~\ref{app:hardness} that, once ties (indifferences) are allowed in preference lists, the problem becomes NP-hard.

\paragraph*{Algorithm Overview.}
Our algorithm exploits the distributive lattice structure of stable matchings. As mentioned earlier, the stable matching lattice admits a polynomial-size representation by a rotation poset, which is true even in the many-to-many setting \cite{BansalAgrawalMalhotra2007,EirinakisMagosMourtosMiliotis2012}.

Using this representation, our problem is reduced to finding a closed set of the rotation poset corresponding to a stable matching with minimum utility gap.
Each rotation can be seen as an operation that transforms one stable matching into another.
A key observation used in our algorithm is that, for any agent $a\in S\cup L$, the set of rotations that change the assignment of $a$ forms a chain of length at most $n=\max\{|S|,|L|\}$; we denote this chain by $\C(a)$.
Furthermore, the assignment of $a$ in the stable matching corresponding to a closed set $X$ depends only on the largest element of $X\cap\C(a)$.
The utility value received by $a$ is monotone along this chain: it is non-increasing if $a\in S$ and non-decreasing if $a\in L$.

Exploiting these structural properties, we first give a procedure that, for a given interval $[\Blow,\Bhigh]$, tests whether there exists a stable matching $M$ such that
$\Blow\leq v_a(M(a))\leq \Bhigh$ for every $a\in A$.
The interval constraints are translated into forced and forbidden rotations, and feasibility is checked by computing a downward closure in the rotation digraph.
The above observations also imply that the set of utility values that can appear for agents in $A$ has size $O(n^2)$.
By applying a sliding-window search over these values, we find an optimal feasible interval using only $O(n^2)$ feasibility tests.
The resulting running time is $O(n^4+n^2T_v)$, where $T_v$ is the time needed to evaluate one value $v_a(X)$.

\paragraph*{Comparison with Existing Frameworks.}
To position our result, we investigate its relation to existing frameworks.
First, we show that neither the difference nor the ratio version of our objective is captured by the minimum-cut representability framework of Faenza--Foussoul--He \cite{FaenzaFoussoulHe2025}, which includes various other tractable stable matching optimization problems.
We give a concrete instance in which the characterization condition of \cite{FaenzaFoussoulHe2025} fails for our objectives.

Second, we show that, when $A$ lies entirely on one side, i.e., $A\subseteq S$ or $A\subseteq L$, both of our objective functions have a submodularity interpretation.
More precisely, by regarding the objective as a function on the family of closed subsets of the rotation poset, we obtain a submodular function over this ring family. Such a function can be efficiently minimized \cite{Schrijver2000,IwataFleischerFujishige2001}.
This fact may be of independent interest, while we also show that such an interpretation does not extend to the general case where $A$ contains agents from both $S$ and $L$.

These observations highlight the need for an algorithm tailored to the present objective.

\subsubsection*{Further Related Work}

The rotation-poset approach to stable matchings has a long history.
For the marriage setting, it was developed by Irving--Leather--Gusfield \cite{IrvingLeatherGusfield1987}, Gusfield \cite{Gusfield1987}, and Gusfield--Irving \cite{GusfieldIrving1989}.
For many-to-many matchings, Ba\"{\i}ou--Balinski \cite{BaiouBalinski2000ManyToMany} established fundamental lattice properties, and Bansal--Agrawal--Malhotra \cite{BansalAgrawalMalhotra2007} developed rotation-based algorithms for finding optimal stable matchings; this line was further developed by Eirinakis--Magos--Mourtos--Miliotis \cite{EirinakisMagosMourtosMiliotis2012}.
More recent work studies compact representations and rotation-like structures for models with choice functions, including the works of Faenza--Zhang \cite{FaenzaZhang2021} and Karzanov \cite{Karzanov2024}.

There is also a broad literature on fairness-oriented optimization over stable matchings.
Besides the ones discussed above, there are works on leximin fairness \cite{NarangBiswasNarahari2022} and Nash social welfare \cite{JainVaish2024} in two-sided matching markets.
Median and generalized median stable matchings are connected to the geometry of stable matchings and fairness considerations \cite{TeoSethuraman1998,SethuramanTeoQian2006,Cheng2010}.

\section{Problem Setting}
\label{sec:problem-setting}
An instance of our problem, the {\em Stable Matching Problem with Minimum Utility Gap}, consists of an instance
$I=(S,L,\{\succ_a\}_{a\in S\cup L},\{q_a\}_{a\in S\cup L})$ of the standard many-to-many matching market, together with a nonempty agent subset $A\subseteq S\cup L$ and value functions $\{v_a\}_{a\in A}$.
Here, $S$ is a set of students, $L$ is a set of laboratories, and each element of $S\cup L$ is called an agent.
We write $n=\max\{|S|,|L|\}$.

Each agent has a strict preference list over a subset of agents on the opposite side, and $x\succ_a y$ means that agent $a$ strictly prefers $x$ to $y$.
The positive integer $q_a$ denotes the capacity of agent $a$.
If agent $a$ appears in agent $x$'s preference list, then $a$ is said to be {\em acceptable} to $x$.
Since we consider only mutually acceptable pairs in matchings, we assume without loss of generality that acceptability is mutual.

For each $a\in A$, the value function $v_a$ assigns a value to each subset of acceptable agents on the opposite side.
Throughout the paper, we assume that $v_a$ is {\em consistent} with $\succ_a$, meaning that it satisfies the following condition:

\begin{itemize}
    \item[($\star$)] For two sets $X$ and $Y$ of agents acceptable to $a$, if $|X|=|Y|$ and $x\succ_a y$ holds for every $x\in X$ and every $y\in Y\setminus X$, then $v_a(X)\geq v_a(Y)$.
\end{itemize}

\noindent Simple examples satisfying this condition are the following.
\begin{description}
    \item[Total Utility:] There is a real-valued weight function $w_a$ on acceptable agents such that $x\succ_a y\implies w_a(x)\geq w_a(y)$, and the value of partner set $X$ is $v_a(X)=\sum_{x\in X}w_a(x)$.
    \item[Average Value:] There is a weight function $w_a$ as above, and $v_a(X)=\frac{1}{|X|}\sum_{x\in X}w_a(x)$ for each nonempty partner set $X$, while $v_a(\emptyset)=0$.
\end{description}

Any function satisfying ($\star$) can be handled in our framework, provided that its value can be accessed by a polynomial-time oracle. In the two examples above, the weight functions $w_a$ may instead be given explicitly.

Let $E\subseteq S\times L$ be the set of all acceptable pairs.
For $M\subseteq E$ and an agent $a\in S\cup L$, let $M(a)$ denote the set of partners of $a$ in $M$; for example, $M(\ell)=\{s\in S\mid (s,\ell)\in M\}$ for $\ell\in L$.
A set $M\subseteq E$ is a {\em matching} if $|M(a)|\leq q_a$ for every $a\in S\cup L$.
For a matching $M$, a pair $(s,\ell)\in E\setminus M$ is a {\em blocking pair} if the following two conditions hold:
\begin{itemize}
    \item student $s$ satisfies $|M(s)|<q_s$, or $\ell\succ_s \ell'$ holds for some $\ell'\in M(s)$; and
    \item laboratory $\ell$ satisfies $|M(\ell)|<q_\ell$, or $s\succ_\ell s'$ holds for some $s'\in M(\ell)$.
\end{itemize}
A matching is {\em stable} if it admits no blocking pair.
Stability depends only on $I$, and we denote the set of all stable matchings in $I$ by $\mathcal S(I)$.

It is known that $|\mathcal S(I)|$ can be exponential in $n$. We seek a stable matching that is fair for the agents in the specified set $A$.
For a stable matching $M$ and an agent $a\in A$, the utility of $a$ is $v_a(M(a))$.
We consider two objective functions. The difference version is
\[
\Delta_{\rm diff}(M)\coloneqq \max_{a\in A} v_a(M(a))-\min_{a\in A} v_a(M(a)).
\]
In the ratio version below, we assume that all relevant function values are positive:
\[
\Delta_{\rm rat}(M)\coloneqq \frac{\max_{a\in A} v_a(M(a))}{\min_{a\in A} v_a(M(a))}.
\]
The {\em Stable Matching Problem with Minimum Utility Gap} asks, given $I$, $A$, and $\{v_a\}_{a\in A}$, to find a stable matching $M\in\mathcal S(I)$ minimizing either $\Delta_{\rm diff}$ or $\Delta_{\rm rat}$.
As we will see, both versions can be solved by the same framework, with only the final comparison rule changed.
\section{Preliminaries}
\label{sec:preliminaries}

Since our algorithm depends on known structural properties of the set of stable matchings, we review them in this section. Let $I=(S,L,\{\succ_a\}_{a\in S\cup L},\{q_a\}_{a\in S\cup L})$ be a many-to-many stable matching instance as described in the previous section, and let $n=\max\{|S|,|L|\}$.

We list some standard facts that will be used in our analysis. These facts are shown for the one-to-one or many-to-one cases by Gusfield--Irving \cite{GusfieldIrving1989}, and for the many-to-many case by Ba\"{\i}ou--Balinski \cite{BaiouBalinski2000ManyToMany}, Bansal--Agrawal--Malhotra \cite{BansalAgrawalMalhotra2007}, and Eirinakis--Magos--Mourtos--Miliotis \cite{EirinakisMagosMourtosMiliotis2012}.

\begin{theorem}[Rural Hospitals Theorem {\cite[Theorem~2 and Corollary on p.~5]{BaiouBalinski2000ManyToMany}}]
\label{thm:RHT}
For every two stable matchings $M, M'\in \S(I)$, each agent $a\in S\cup L$ satisfies $|M(a)|=|M'(a)|$. Moreover, if $|M(a)|<q_a$ in some stable matching $M$, then $M(a)=M'(a)$ for all stable matchings $M'$.
\end{theorem}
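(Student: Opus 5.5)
We prove the Rural Hospitals Theorem for the many-to-many setting directly from stability, using a counting argument on the union of two stable matchings. Fix $M,M'\in\S(I)$ and consider the multigraph on $S\cup L$ formed by $M\cup M'$. The first goal is the inequality $|M(a)|\le|M'(a)|$ for every agent of one kind, obtained by charging edges of $M\setminus M'$ to edges of $M'\setminus M$. Summing over students and over labs then closes the argument, since $\sum_{s\in S}|M(s)|=|M|=\sum_{\ell\in L}|M(\ell)|$.

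\textbf{Local exchange.} For each agent $a$, compare $M(a)\setminus M'(a)$ with $M'(a)\setminus M(a)$. Take $(s,\ell)\in M\setminus M'$. Stability of $M'$ says $(s,\ell)$ does not block $M'$, so either $s$ is full in $M'$ and every $\ell'\in M'(s)$ satisfies $\ell'\succ_s\ell$, or $\ell$ is full in $M'$ and every $s'\in M'(\ell)$ satisfies $s'\succ_\ell s$. Define $\ell$ to be \emph{$M'$-preferring} on $(s,\ell)$ in the latter case, and $s$ in the former.

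The standard strategy is to show that for each edge of $M\setminus M'$ at least one endpoint strictly prefers its $M'$-set there, while for each edge of $M'\setminus M$ at least one endpoint strictly prefers its $M$-set. Combining this with the lattice fact that the set of students preferring $M$ on an edge is opposite to the labs preferring $M$ gives a decomposition. Concretely, I would define $S^+$ as the set of students $s$ for which some edge of $M\setminus M'$ at $s$ has $s$ not $M'$-preferring. For such $s$ the edge is ``blamed'' on the lab, which must be full in $M'$ with all of $M'(\ell)$ better than $s$.

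\textbf{Counting.} Count $|M\setminus M'|$ and $|M'\setminus M|$ agent by agent. Each agent that is full in $M'$ contributes at least as many $M'\setminus M$ edges as $M\setminus M'$ edges, since $|M'(a)|=q_a\ge|M(a)|$. Each edge of $M\setminus M'$ has an endpoint full in $M'$ that prefers all its $M'$-partners; symmetrically, each edge of $M'\setminus M$ has an endpoint full in $M$. A careful double count then shows that the agents with $|M(a)|>|M'(a)|$ and those with $|M(a)|<|M'(a)|$ cannot exist. Suppose an agent $a$ has $|M(a)|<|M'(a)|$, so $a$ is unsaturated in $M$. Every edge of $M'(a)\setminus M(a)$ must then be blamed on the other endpoint, which is full in $M$. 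Propagating these full agents gives a flow whose total imbalance contradicts the equality of the sums of degrees.

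\textbf{Main obstacle.} The hardest step is making this propagation rigorous in the many-to-many setting, where one-to-one alternating paths are unavailable. The fallback is to use the lattice/choice-function machinery of Ba\"{\i}ou--Balinski: the student-optimal matching $M_S$ satisfies $|M_S(s)|\ge|M(s)|$ for all students and $|M_S(\ell)|\le|M(\ell)|$ for all labs, obtained from the deferred-acceptance rejection argument. Summing these inequalities forces equality everywhere.

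For the second claim, suppose $|M(a)|<q_a$. Then, by the first part, $a$ is unsaturated in every stable matching. If $M'(a)\ne M(a)$, pick $b\in M'(a)\setminus M(a)$ and compare with $M_S$ or $M_L$ to obtain a blocking pair, using that $a$ has spare capacity in both matchings.
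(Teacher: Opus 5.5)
The paper gives no proof of this statement; it imports it from Ba\"{\i}ou--Balinski. So your proposal has to stand on its own, and as written it has a genuine gap.

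The counting and propagation part is not an argument. The phrases ``a careful double count'' and ``propagating these full agents gives a flow'' stand in for the whole proof. You also cannot invoke the ``lattice fact'' on opposite interests. In this paper, comparability and the lattice structure (Theorems~\ref{thm:comparability} and~\ref{thm:lattice}) are phrased through $\succeq_a$, which only relates sets of equal size, so they already contain the first claim. For the same reason $M_S$ cannot be taken to be the paper's $M_0$.

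Everything therefore rests on your fallback, whose skeleton is right. Suppose $M_S$ is the output of many-to-many student-proposing deferred acceptance, and $|M_S(s)|\ge|M(s)|$ and $|M_S(\ell)|\le|M(\ell)|$ hold for every stable $M$. Then $\sum_{s}|M_S(s)|\ge|M|\ge\sum_{\ell}|M_S(\ell)|=\sum_s|M_S(s)|$ forces equality. But these two inequalities are the entire content of the theorem, and you only assert them.

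Two things are missing. First, you need the many-to-many version of `no stable partner is ever rejected'. Take the first rejection of $s$ by $\ell$ with $(s,\ell)$ in some stable $M$. At that point $\ell$ holds $q_\ell$ students it prefers to $s$, at most $q_\ell-1$ of which lie in $M(\ell)$, so it holds some $s'\notin M(\ell)$ with $s'\succ_\ell s\in M(\ell)$. No lab of $M(s')$ has rejected $s'$ yet, and $\ell$ is among the $q_{s'}$ best labs that have not rejected $s'$. Hence either $|M(s')|<q_{s'}$ or $s'$ ranks $\ell$ above some lab of $M(s')$, so $(s',\ell)$ blocks $M$. It follows that $M(s)\subseteq N_s$, the set of labs that never reject $s$. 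Since $M_S(s)$ consists of the $\min\{q_s,|N_s|\}$ best labs of $N_s$, you get $|M_S(s)|\ge|M(s)|$. Moreover, when $|M(s)|=q_s$, every lab of $M_S(s)\setminus M(s)$ beats every lab of $M(s)\setminus M_S(s)$.

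Second, the lab inequality does not follow from the rejection argument; it needs a separate blocking pair. If $|M_S(\ell)|>|M(\ell)|$, then $\ell$ is unsaturated in $M$. For $s\in M_S(\ell)\setminus M(\ell)$, the refined property just stated gives the student side, so $(s,\ell)$ blocks $M$. Knowing only that each student `weakly prefers' $M_S$ would not supply that student side in the many-to-many case.

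The same step is what your second part needs, and it must be run against the extremal matching, not an arbitrary $M'$. For a lab $a$ unsaturated in $M$, it gives $M_S(a)\subseteq M(a)$, hence $M(a)=M_S(a)$ for every stable $M$ by the first claim. For a student, use lab-proposing deferred acceptance and $M_L$. For an arbitrary $M'$ and $b\in M'(a)\setminus M(a)$, nothing controls $b$'s side of the blocking condition. In fact, stability of $M$ forces $b$ to be saturated in $M$ with all of $M(b)$ better than $a$, so $(b,a)$ does not block $M$.
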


For an agent $a\in S\cup L$ and any two sets $X$ and $Y$ of agents acceptable to $a$, we write $X\succeq_a Y$ if $|X|=|Y|$ and $x\succ_a y$ holds for every $x\in X$ and every $y\in Y\setminus X$ (possibly $Y\setminus X=\emptyset$, i.e., $X=Y$). We write $X\succ_a Y$ if $X\succeq_a Y$ and $X\neq Y$.

\begin{theorem}[Comparability of assigned sets {\cite[Theorem~3]{BaiouBalinski2000ManyToMany}}]
\label{thm:comparability}
For every two stable matchings $M,M'\in \S(I)$ and every agent $a\in S\cup L$, we have $M(a)\succeq_a M'(a)$ or $M'(a)\succeq_a M(a)$.
\end{theorem}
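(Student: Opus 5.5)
We prove Theorem~\ref{thm:comparability} directly from stability and Theorem~\ref{thm:RHT}. Fix stable matchings $M,M'$ and an agent $a$; by symmetry take $a=s\in S$. By Theorem~\ref{thm:RHT}, $|M(s)|=|M'(s)|$. If $|M(s)|<q_s$, the second part of Theorem~\ref{thm:RHT} gives $M(s)=M'(s)$ and we are done. So assume $|M(s)|=|M'(s)|=q_s$.

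Suppose, for contradiction, that neither $M(s)\succeq_s M'(s)$ nor $M'(s)\succeq_s M(s)$. Equal cardinalities then give labs $\ell_1\in M'(s)\setminus M(s)$ and $\ell_2\in M(s)\setminus M'(s)$ together with some $x\in M(s)$ and $y\in M'(s)$ such that $\ell_1\succ_s x$ and $\ell_2\succ_s y$. Since $x\in M(s)$ and $\ell_1\notin M(s)$, the pair $(s,\ell_1)$ would block $M$ unless $\ell_1$ is full in $M$ and prefers every $s'\in M(\ell_1)$ to $s$. Symmetrically, $\ell_2$ is full in $M'$ and prefers all of $M'(\ell_2)$ to $s$. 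So each violation forces a lab to be saturated with strictly better students in the other matching.

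To turn this into a contradiction, we use the standard counting (decomposition) argument behind the lattice property. Consider the symmetric difference $M\triangle M'$. Using Theorem~\ref{thm:RHT} (all degrees are equal across the two matchings), orient each edge according to which endpoint prefers it. Stability of both matchings implies a consistency property: along every edge of $M\triangle M'$, the student's preference and the lab's preference disagree, and every agent's preferred side is uniform across its edges in the difference. We prove this uniformity claim by an exchange/counting argument. Suppose some student $s$ prefers $M$ at some edges and $M'$ at others. Chase alternating paths out of $s$: a lab that prefers $M'$ is saturated in $M'$ with better students, so those students gain in $M'$, and so on. The degree equalities imply that this process must close up. It then yields a set of agents for whom $\sum$ over the edges of ``prefers $M$'' cannot balance against ``prefers $M'$'', which contradicts the equal cardinalities.

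The main obstacle is this uniformity step, namely showing that no agent is split in its preferences between $M$ and $M'$. A cleaner alternative is to verify that $M\vee M'$, defined by letting each student take its $q_s$ best labs from $M(s)\cup M'(s)$, is a stable matching whose restriction at each student is exactly one of $M(s)$ or $M'(s)$. Once uniformity holds, $s$ ranks every lab of $M(s)\setminus M'(s)$ above every lab of $M'(s)\setminus M(s)$ or the reverse, which is exactly $M(s)\succeq_s M'(s)$ or $M'(s)\succeq_s M(s)$. The case of labs is symmetric.
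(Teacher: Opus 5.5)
The paper does not prove this statement (it is quoted from Ba\"{\i}ou--Balinski), so your argument has to stand on its own, and as written it has a genuine gap at the decisive step. Your opening deductions are correct. But the local fact you extract gives no contradiction by itself: $\ell_1$ is full in $M$ with all partners better than $s$, and $\ell_2$ is full in $M'$ likewise.

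The ``uniformity'' claim you then set out to prove, that every agent's preferred side is uniform across its edges of $M\triangle M'$, is just the theorem restated. The alternating-path chase, ``the degree equalities imply that this process must close up,'' and ``the sum cannot balance'' are assertions, not an argument. In the many-to-many case an agent does not simply ``prefer'' one of its edges; a reference point is needed. Also, stability alone only says that the two endpoints of an edge of $M\triangle M'$ do not \emph{both} want it, not that they disagree. Your ``cleaner alternative'' is circular. That the $q_s$ best labs in $M(s)\cup M'(s)$ form exactly $M(s)$ or $M'(s)$ is precisely comparability. Checking that $M\vee M'$ respects lab capacities and is stable normally relies on comparability.

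The missing idea is a global double count, which you gesture at but never carry out.
\begin{itemize}
\item \textbf{Upper bound.} Call an endpoint $a$ of $e\in M'\setminus M$ \emph{attracted} to $e$ if $a$ satisfies its half of the blocking condition with respect to $M$; define this symmetrically for $e\in M\setminus M'$ with respect to $M'$. Stability of $M$ and $M'$ says at most one endpoint of each edge of $M\triangle M'$ is attracted. So the total number $W$ of attracted (agent, edge) incidences is at most $|M\triangle M'|$.
\item \textbf{Lower bound per agent.} Put $D_a=M(a)\setminus M'(a)$, $D'_a=M'(a)\setminus M(a)$, and $d_a=|D_a|=|D'_a|$ (Theorem~\ref{thm:RHT}). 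Suppose the $\succ_a$-worst element of $D_a\cup D'_a$ lies in $D_a$. It belongs to $M(a)$ and every element of $D'_a$ beats it, so $a$ is attracted to all $d_a$ edges joining it to $D'_a$. The other case is symmetric, so every agent has at least $d_a$ attracted incidences.
\item \textbf{Tightness.} Summing gives $W\geq\sum_a d_a=2|M\setminus M'|=|M\triangle M'|$, using $|M|=|M'|$. Hence equality holds at every agent.
\item \textbf{Conclusion.} In the case where the worst element lies in $D_a$, $a$ is attracted to none of its edges to $D_a$. Because $a$ is full in $M'$ when $d_a>0$, this means every element of $M'(a)$ beats every $x\in D_a$, which is exactly $M'(a)\succeq_a M(a)$.
\end{itemize}
This treats students and labs uniformly and needs no path chasing.
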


For stable matchings $M$ and $M'$, we write $M\succeq_S M'$ if $M(s)\succeq_s M'(s)$ for every student $s\in S$. Thus, larger elements with respect to $\succeq_S$ are better for students.

\begin{theorem}[Distributive lattice structure {\cite[Theorem~4]{BaiouBalinski2000ManyToMany}}]
\label{thm:lattice}
The partially ordered set $(\mathcal S(I),\succeq_S)$ forms a distributive lattice. For two stable matchings $M,M'\in \mathcal S(I)$, the join $M\vee M'$ is obtained by assigning to each student $s\in S$ the better of the two sets $M(s)$ and $M'(s)$ with respect to $\succeq_s$, while the meet $M\wedge M'$ is obtained by assigning to each student $s\in S$ the worse of the two sets.
\end{theorem}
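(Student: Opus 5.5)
We prove Theorem~\ref{thm:lattice} in three steps: show that the join and meet are stable matchings, show they are the least upper and greatest lower bounds, and check distributivity. For $M,M'\in\S(I)$, define $M\vee M'$ by giving each student $s$ the $\succeq_s$-better of $M(s)$ and $M'(s)$. These are comparable by Theorem~\ref{thm:comparability}. Let $M\wedge M'$ be defined symmetrically with the worse set. A priori these are only sets of pairs specified from the student side.

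\textbf{Step 1 (opposition of interests).} The central step is the following claim: for every laboratory $\ell$, the set $(M\vee M')(\ell)$ is the $\succeq_\ell$-worse of $M(\ell)$ and $M'(\ell)$. I would prove it by a counting argument. Use the Rural Hospitals Theorem (Theorem~\ref{thm:RHT}) to fix the sizes $|M(a)|=|M'(a)|$, and consider the pairs in $M\triangle M'$. Suppose $s$ is assigned to $\ell$ in the join because $M(s)\succ_s M'(s)$ and $\ell\in M(s)\setminus M'(s)$. Stability of $M'$ then forces $M'(\ell)$ to be full, with every member preferred by $\ell$ to $s$; otherwise $(s,\ell)$ would block $M'$. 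Hence $M'(\ell)\succeq_\ell M(\ell)$ fails, which by comparability means $M(\ell)\succeq_\ell M'(\ell)$ as sets. I would then need a pairing or counting argument over the agents with $M(a)\neq M'(a)$. Its aim is to show that every student choosing $M$ at $\ell$ lies in the worse set of $\ell$. It should also give the capacity bound: each laboratory receives exactly its worse set, so $|(M\vee M')(\ell)|\le q_\ell$. I expect this to be the main obstacle. In the many-to-many setting a student's chosen pairs may come partly from $M$ and partly from $M'$ at different laboratories. So the argument must be careful that the laboratory-side set is exactly one of $M(\ell)$ or $M'(\ell)$, rather than a mixture. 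I would first try to derive this directly from the strict inequalities forced by stability together with the equal cardinalities.

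\textbf{Step 2 (stability and bounds).} With Step 1 in hand, $M\vee M'$ is a matching. To show it is stable, take any pair $(s,\ell)$ outside it. Since $s$ holds its better set and $\ell$ holds its worse set, a blocking pair for the join would already block whichever of $M$ or $M'$ gives $s$ its set. That matching is stable, a contradiction. The meet is handled symmetrically, with the roles of the two sides exchanged. From the construction, $M\vee M'\succeq_S M$ and $M\vee M'\succeq_S M'$. Any common upper bound $N$ satisfies $N(s)\succeq_s M(s),M'(s)$ for every $s$, hence $N\succeq_S M\vee M'$. Here I use that $\succeq_s$ is a partial order on sets of the same size, which needs a short check of transitivity. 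So the join is the least upper bound, and dually the meet is the greatest lower bound.

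\textbf{Step 3 (distributivity).} Both operations act coordinatewise on students, taking the maximum or minimum in the chain formed by the sets $\{N(s):N\in\S(I)\}$ under $\succeq_s$. These sets form a chain by Theorem~\ref{thm:comparability}. In a chain, maximum and minimum distribute over each other. Therefore $M\wedge(M'\vee M'')=(M\wedge M')\vee(M\wedge M'')$ holds in each coordinate, and hence holds for the matchings.
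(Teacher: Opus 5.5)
The paper does not prove this theorem; it quotes it from Ba\"{\i}ou--Balinski. Your outline therefore has to stand on its own, and its decisive step is missing. Step~1 is the entire content of the theorem, and you only describe what a counting argument ought to deliver.

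\textbf{The one inference you make is reversed.} If $s\in M(\ell)\setminus M'(\ell)$ and every member of the full set $M'(\ell)$ is preferred by $\ell$ to $s$, then it is $M(\ell)\succeq_\ell M'(\ell)$ that fails. Comparability then gives $M'(\ell)\succ_\ell M(\ell)$. Your conclusion $M(\ell)\succeq_\ell M'(\ell)$ contradicts the claim of Step~1 itself.

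\textbf{Local arguments give only one inclusion.} Stability of $M'$ forbids an edge of $M\setminus M'$ whose endpoints both strictly prefer $M$. Stability of $M$ forbids an edge of $M'\setminus M$ whose endpoints both strictly prefer $M'$. Together these show that $(M\vee M')(\ell)$ is contained in the $\succeq_\ell$-worse of $M(\ell)$ and $M'(\ell)$, so the capacity bound you worry about is the easy part. Nothing local rules out an edge of $M\setminus M'$ whose endpoints both prefer $M'$. Such a student would drop $\ell$ in the join, leaving $\ell$ with a proper subset of its worse set. Your Step~2, which needs $\ell$ to hold exactly $M(\ell)$ or $M'(\ell)$, would then have no foundation.

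\textbf{The missing idea is a global double count,} as in the classical one-to-one proof.
\begin{itemize}
\item By Theorem~\ref{thm:RHT}, every agent has $d_a=|M(a)\setminus M'(a)|=|M'(a)\setminus M(a)|$. This is positive exactly when $a$ strictly prefers one of the two sets.
\item Let $S_M=\{s: M(s)\succ_s M'(s)\}$ and $L_{M'}=\{\ell: M'(\ell)\succ_\ell M(\ell)\}$.
\item By stability of $M'$: (1) every edge of $M\setminus M'$ at $S_M$ ends in $L_{M'}$.
\item By stability of $M$: (2) every edge of $M'\setminus M$ at $L_{M'}$ ends in $S_M$.
\item Count edges of $M\setminus M'$, then of $M'\setminus M$: this gives $\sum_{s\in S_M}d_s\le\sum_{\ell\in L_{M'}}d_\ell\le\sum_{s\in S_M}d_s$.
\end{itemize}
Equality turns (1) and (2) into equivalences. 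So every edge of $M\triangle M'$ joins a student and a laboratory with strictly opposite preferences, and the join gives each $\ell$ exactly its worse set.

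\textbf{Step~2 routes a blocking pair to the wrong matching.} That $\ell$ wants $s$ relative to its worse set says nothing about its better set. For example, take $q_\ell=1$ and $s_2\succ_\ell s\succ_\ell s_1$, with worse set $\{s_1\}$ and better set $\{s_2\}$. Use instead the matching that supplies $\ell$'s set. The pair $(s,\ell)$ is not in it. Also, $s$ wants $\ell$ there, because wanting $\ell$ relative to the $\succeq_s$-better set implies wanting it relative to the worse one. For the meet, dually, use the matching that supplies $s$'s set.

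Your least-upper-bound argument and Step~3 (coordinatewise max/min in the chains $\{N(s):N\in\mathcal S(I)\}$) are fine once these repairs are made.
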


Theorem~\ref{thm:lattice} in particular implies the existence of a unique maximum element and a unique minimum element in $(\mathcal S(I),\succeq_S)$.
Let $M_0$ and $M_z$ denote the maximum and minimum stable matchings, respectively. We call $M_0$ the student-optimal stable matching and $M_z$ the student-pessimal stable matching.

By Birkhoff's representation theorem, every finite distributive lattice is isomorphic to the lattice of closed sets of a partially ordered set. In the case of the stable matching lattice, this poset is known to be represented by a partial order on rotations, which are defined below.
We follow the definitions in \cite[Definition~2]{BansalAgrawalMalhotra2007} and \cite[Definition~4]{EirinakisMagosMourtosMiliotis2012}.\footnote{In \cite{BansalAgrawalMalhotra2007}, this object is called a {\em meta-rotation}, to be distinguished from rotations in the one-to-one matching case. For simplicity, we use the term {\em rotation}.}

\begin{definition}[Rotation]
\label{def:rotation}
A {\em rotation} is an ordered list $\rho=((s_0,\ell_0),\ldots,(s_{t-1},\ell_{t-1}))$, where $2\leq t\leq n$, satisfying the following conditions for some stable matching $M$:
\begin{itemize}
    \item each pair $(s_i,\ell_i)$ belongs to $M$, and any agent appears at most once in $\rho$;
    \item $s_i$ is the worst student for $\ell_i$ among the students in $M(\ell_i)$;
    \item $\ell_i$ is the best laboratory for $s_{i+1}$ among laboratories $\ell$ such that $\ell\notin M(s_{i+1})$ and $\ell$ prefers $s_{i+1}$ to its worst assignee in $M$,
\end{itemize}
where indices are taken modulo $t$. When these conditions hold for a stable matching $M$, we say that $\rho$ is {\em exposed} in $M$.
\end{definition}

\begin{definition}[Elimination of a rotation]
\label{def:rotation-elimination}
Let $\rho=((s_0,\ell_0),\ldots,(s_{t-1},\ell_{t-1}))$ be a rotation. Define $\rho^-=\{(s_0,\ell_0),\ldots,(s_{t-1},\ell_{t-1})\}$ and $\rho^+=\{(s_1,\ell_0),(s_2,\ell_1),\ldots,(s_0,\ell_{t-1})\}$, where indices are taken modulo $t$. If $\rho$ is exposed in a stable matching $M$, then eliminating $\rho$ from $M$ means replacing $M$ by $M/\rho\coloneqq (M\setminus\rho^-)\cup\rho^+$.
\end{definition}

As shown in \cite[Lemma 2]{BansalAgrawalMalhotra2007}, $M/\rho$ is again a stable matching whenever $\rho$ is exposed in $M$. It also follows from the definition of a rotation that $M\succeq_S M/\rho$ (see Lemma~\ref{lem:one-step-monotonicity} for the proof). Thus, repeated application of rotation elimination yields a sequence of stable matchings that monotonically worsens for students. In fact, all stable matchings can be obtained in this way.

\begin{theorem}[Traversal by rotations {\cite[Theorem~2]{BansalAgrawalMalhotra2007}}]
\label{thm:traversal}
Every stable matching can be obtained from the student-optimal stable matching $M_0$ by repeatedly eliminating a rotation that is exposed at the current stable matching.
\end{theorem}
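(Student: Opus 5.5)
We prove the statement by induction, walking from $M_0$ down to an arbitrary target $M\in\S(I)$. The invariant is that the current matching $M'$ is stable and satisfies $M'\succeq_S M$; initially $M'=M_0$ by Theorem~\ref{thm:lattice}. It suffices to show the following: whenever $M'\neq M$, some rotation $\rho$ is exposed in $M'$ and $M'/\rho\succeq_S M$. Each elimination strictly worsens at least one student, so the relation $\succ_S$ is strict along the sequence. Since $\S(I)$ is finite, the process must therefore terminate, and by the invariant it can only terminate at $M$.

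First I would establish the opposition-of-interests fact: $M'\succeq_S M$ implies $M(\ell)\succeq_\ell M'(\ell)$ for every lab $\ell$. The argument combines Theorem~\ref{thm:comparability} with the stability of $M'$ and $M$ in the standard way. Any pair in $M\setminus M'$ would otherwise block, and Theorem~\ref{thm:RHT} fixes the cardinalities so the comparisons are well defined. By Theorem~\ref{thm:RHT}, only agents saturated in every stable matching can change partners, so the capacity clauses in the blocking-pair definition can be ignored below.

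Next I would construct the rotation by a successor map. Let $D$ be the set of students $s$ with $M'(s)\neq M(s)$; then $M'(s)\succ_s M(s)$. For $s\in D$, consider the labs of $M'(s)\setminus M(s)$, all of which are better for $s$ than every lab of $M(s)\setminus M'(s)$. The claim is that among these labs there is an $\ell$ such that the worst student $s^\ast$ of $M'(\ell)$ again lies in $D$. I would argue this as follows. Since $M(\ell)\succ_\ell M'(\ell)$, the worst student of $M'(\ell)$ does not belong to $M(\ell)$. That student's assignment must then change, for otherwise stability of $M$ is contradicted.

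Given $s_i\in D$ with its lab $\ell_i$ as above, define $s_{i+1}$ to be a student such that $\ell_i$ is the best lab for $s_{i+1}$ among labs not in $M'(s_{i+1})$ that prefer $s_{i+1}$ to their worst assignee. I must show that such an $s_{i+1}$ exists in $D$, using a lab in $M(s_{i+1})\setminus M'(s_{i+1})$ as the witness. The witness works because a lab $\ell$ with $s_{i+1}\in M(\ell)\setminus M'(\ell)$ prefers $s_{i+1}$ to its worst $M'$-assignee, by $M(\ell)\succ_\ell M'(\ell)$. I must also show that the best such lab is not better for $s_{i+1}$ than the labs of $M(s_{i+1})$; otherwise $M$ would be blocked. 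Following this successor map from a starting student yields a sequence in the finite set $D$, so it eventually cycles, and the cycle $((s_0,\ell_0),\dots,(s_{t-1},\ell_{t-1}))$ is exposed in $M'$ by construction.

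Finally, eliminating $\rho$ gives each $s_{i+1}$ the lab $\ell_{i-1}$ in place of $\ell_{i+1}$. That new lab is at least as good as every lab of $M(s_{i+1})\setminus M'(s_{i+1})$, and it either lies in $M(s_{i+1})$ or is better than all of its labs. Hence $(M'/\rho)(s)\succeq_s M(s)$ for every $s$, which preserves the invariant.

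The main obstacle is the index bookkeeping in the successor step. The lab the chain moves along, the worst-student condition, and the best-lab condition in Definition~\ref{def:rotation} must align so that the cycle really is a rotation and its elimination never pushes a student below $M$. There I would first try the choice ``$\ell$ is the $\succ_s$-worst lab in $M'(s)\setminus M(s)$''. This mirrors the one-to-one proof of Gusfield--Irving.
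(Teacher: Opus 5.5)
The paper does not prove this theorem; it cites Bansal--Agrawal--Malhotra. So I judged your argument on its own. Your outer descent is the standard route and is sound: the invariant $M'\succeq_S M$, opposition of interests, the witness argument, and the final check that $(M'/\rho)(s)\succeq_s M(s)$. The step that produces the rotation, however, runs in the wrong direction and fails.

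Write $b(s)$ for the best lab $\ell\notin M'(s)$ that prefers $s$ to its worst $M'$-assignee, and $w(\ell)$ for the worst student of $M'(\ell)$. Definition~\ref{def:rotation} requires $\ell_i=b(s_{i+1})$ and $s_i=w(\ell_i)$, so the well-defined map is $s_{i+1}\mapsto w(b(s_{i+1}))=s_i$. You instead fix $\ell_i$ and look for some $s_{i+1}$ with $b(s_{i+1})=\ell_i$. That means inverting $b$, which is neither single-valued nor total. Your witness (a lab of $M(s_{i+1})\setminus M'(s_{i+1})$) only shows that $b(s_{i+1})$ exists, not that it equals the prescribed $\ell_i$.

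Here is a one-to-one counterexample with students $m_j$ and labs $w_j$:
$m_1\colon w_1\succ w_2$, $m_2\colon w_2\succ w_1\succ w_3$, $m_3\colon w_3\succ w_1$, $w_1\colon m_3\succ m_2\succ m_1$, $w_2\colon m_1\succ m_2$, $w_3\colon m_2\succ m_3$.
The stable matchings are $M_0=\{m_1w_1,m_2w_2,m_3w_3\}$, $\{m_1w_2,m_2w_1,m_3w_3\}$ and $M_z=\{m_1w_2,m_2w_3,m_3w_1\}$. Take $M'=M_0$ and $M=M_z$. Then $D=\{m_1,m_2,m_3\}$, $b(m_1)=w_2$ and $b(m_2)=b(m_3)=w_1$, so no student has $w_3$ as best available lab.

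Start your chain at $s_0=m_3$. Your selection rule forces $\ell_0=w_3$, and the chain dies immediately. Starting at $m_1$ it dies if you pick $s_1=m_3$. So ``the sequence eventually cycles'' is unjustified. Even on a cycle, ``exposed by construction'' would fail: your rule for $\ell_i$ only guarantees $w(\ell_i)\in D$, not $w(\ell_i)=s_i$.

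The repair is to iterate the forward map $f(s)=w(b(s))$, which is a genuine self-map of $D$.
Your witness argument shows that $b(s)$ exists and that $b(s)\succeq_s\ell'$ for every $\ell'\in M(s)\setminus M'(s)$. Hence either $b(s)\in M(s)$, or $b(s)$ is strictly better for $s$ than every lab of $M(s)\setminus M'(s)\neq\emptyset$.
In both cases $M(b(s))\neq M'(b(s))$, since otherwise $(s,b(s))$ blocks $M$.
Therefore $M(b(s))\succ_{b(s)}M'(b(s))$, so $f(s)\in M'(b(s))\setminus M(b(s))$ and $f(s)\in D$.
A cycle of $f$, read backwards, is the rotation: $\ell_i=b(s_{i+1})$ and $s_i=f(s_{i+1})$.
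Its labs are distinct because $f$ is injective on the cycle, and $t\geq 2$ because $b(s)\notin M'(s)$.
Every $(s_i,\ell_i)$ lies in $M'\setminus M$. With that, your final paragraph goes through, with $s_{i+1}$ receiving $\ell_i$, not $\ell_{i-1}$.

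Three smaller points:
\begin{enumerate}
\item Your claim that the best available lab is ``not better for $s_{i+1}$ than the labs of $M(s_{i+1})$'' is false. Above, $b(m_2)=w_1\succ_{m_2}w_3\in M_z(m_2)$, and $(m_2,w_1)$ does not block $M_z$ because $w_1$ prefers $m_3$. Only the weaker statement in your last paragraph holds, and it is all you need.
\item The rule ``$\ell$ is the $\succ_s$-worst lab of $M'(s)\setminus M(s)$'' is not the right one in the many-to-many case. The lab $s_{i+1}$ gives up is forced to be $\ell_{i+1}=b(s_{i+2})$, and it can be the better of two labs at which $s_{i+1}$ is the worst assignee.
\item In the opposition-of-interests step, the blocking pair is a pair of $M'\setminus M$ that blocks $M$, not a pair of $M\setminus M'$.
\end{enumerate}
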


Let $\mathcal R(I)$ be the set of all rotations of $I$. As an immediate consequence of \cite[Lemma 6]{BansalAgrawalMalhotra2007}, we have $|\mathcal R(I)|\leq |S||L|$, and hence $|\mathcal R(I)|=O(n^2)$.
We now define the precedence order on $\mathcal R(I)$ to obtain the rotation poset.

\begin{definition}[Rotation poset]
\label{def:rotation-poset}
For two rotations $\rho,\rho'\in\mathcal R(I)$, we write $\rho\lhd\rho'$ if the following holds: whenever $\rho'$ is exposed in a stable matching $M$, every sequence of successive rotation eliminations from $M_0$ to $M$ contains $\rho$. We write $\rho\unlhd\rho'$ if $\rho\lhd\rho'$ or $\rho=\rho'$.
\end{definition}

A set $X\subseteq\mathcal R(I)$ of rotations is called {\em closed} if $\rho'\unlhd\rho\in X$ implies $\rho'\in X$.

\begin{theorem}[Rotation-poset representation {\cite[Theorem~3]{BansalAgrawalMalhotra2007}}]
\label{thm:isomorphism}
There is a one-to-one correspondence between $\mathcal S(I)$ and the closed subsets of the rotation poset $(\mathcal R(I),\unlhd)$. More precisely, for every closed subset $X\subseteq\mathcal R(I)$, eliminating all rotations in $X$ from $M_0$, in any order consistent with $\unlhd$, yields a stable matching, and this correspondence is bijective.
\end{theorem}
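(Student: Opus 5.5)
The plan is to set up the correspondence through the map $M\mapsto R(M)$, where $R(M)$ is the set of rotations eliminated along some elimination sequence from $M_0$ to $M$, and then check that this map is well defined, that its image is exactly the family of closed sets, and that it is injective.

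\textbf{Step 1: path-independence.} By Theorem~\ref{thm:traversal}, every $M\in\S(I)$ is reached from $M_0$ by some elimination sequence. The first task is to show that the set of rotations appearing in such a sequence does not depend on the sequence, so that $R(M)$ is well defined. For this I would characterize membership by $M$ alone. The claim to prove is
\[
\rho\in R(M)\iff \rho^+\subseteq M',\ \rho^-\cap M'=\emptyset \text{ for } M'=M \text{ or some intermediate matching},
\]
and I would turn this into a static criterion such as $M(s_1)\preceq_{s_1} (M^\rho(s_1)\setminus\{\ell_1\})\cup\{\ell_0\}$, where $M^\rho$ is a matching exposing $\rho$. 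This criterion should be checked using Theorem~\ref{thm:comparability} together with the one-step monotonicity $M\succeq_S M/\rho$. Monotonicity also shows that each student's partner set only worsens along a sequence. A rotation's pair $(s_{i},\ell_{i})$, once removed, can therefore never return, so no rotation is eliminated twice. Together these facts imply that $\rho$ is eliminated on one path iff it is eliminated on every path to $M$.

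\textbf{Step 2: closedness and injectivity.} By Definition~\ref{def:rotation-poset}, every rotation eliminated along a path to $M$ has all its $\lhd$-predecessors eliminated before it on that path. Hence $R(M)$ is closed. For injectivity, the matching obtained from $M_0$ is determined by the net effect of the eliminated pairs, $M=(M_0\setminus\bigcup_{\rho\in R(M)}\rho^-)\cup\bigcup_{\rho\in R(M)}\rho^+$. This formula needs care when a pair is added and later removed. I would handle it by showing that, for each student $s$, the eliminated rotations containing $s$ form a chain applied in a fixed order, so the final value $M(s)$ is determined by $R(M)$. Consequently distinct matchings give distinct sets.

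\textbf{Step 3 (main obstacle): every closed $X$ is realized, in any linear extension.} Here one must show that if $X$ is closed and the rotations of $X$ are taken in an order consistent with $\unlhd$, then each rotation $\rho$ is exposed at the current matching when its turn comes. I would argue by induction on the prefix. Suppose $\rho$ is the next rotation, $N$ is the current matching, and all $\lhd$-predecessors of $\rho$ are in $R(N)$. Take some matching $M^\rho$ at which $\rho$ is exposed, and form $N\wedge M^\rho$ or $N\vee M^\rho$ using Theorem~\ref{thm:lattice}. I would then show that $\rho$ remains exposed at $N$: the pairs $(s_i,\ell_i)$ are still present because no rotation moving $s_i$ past $\ell_i$ lies in $R(N)$ (it would have to come after $\rho$). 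The ``best laboratory'' condition for $s_{i+1}$ is also unchanged, because any intervening change would require a rotation that precedes $\rho$ in every path, and such a rotation lies in $R(N)$ by closedness. Making this precise requires the exact definition of $\lhd$ as ``every path to an exposing matching contains it.'' I expect this exposure-transfer argument to be the delicate part, and I would first try it via the meet with $M^\rho$. Once Step~3 is in place, Steps~1--3 together give the bijection.
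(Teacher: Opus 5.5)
The paper gives no proof of this theorem. It is imported from Bansal--Agrawal--Malhotra, and the paper's own Lemma~\ref{lem:uniqueness}(ii) and Lemma~\ref{lem:chain-agent} are derived \emph{from} it, so your argument must not rely on either.

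Your architecture is the standard one, and its easy parts are fine. Closedness of the rotation set of an elimination sequence is immediate from Definition~\ref{def:rotation-poset}. That set also determines the matching: along one sequence a pair is added at most once and never returns once removed, so $M=(M_0\cup\bigcup_{\rho\in R}\rho^+)\setminus\bigcup_{\rho\in R}\rho^-$, and no chain argument is needed. However, the two load-bearing steps are asserted rather than proved.

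In Step~1, comparability, $M\succeq_S M/\rho$ and non-return do not yield path independence. You must show that every elimination sequence to $M$ passes through a matching containing $(s_1,\ell_1)$; then Lemma~\ref{lem:uniqueness}(i) forces $\rho$ onto that sequence. At the level of partner sets, a sequence can jump over $Y=M^\rho(s_1)$. Consider a step $N\to N/\sigma$ that replaces $x$ by $y$ in $N(s_1)$, where $Y=N(s_1)\setminus\{x\}\cup\{z\}$ and $x\succ_{s_1}z\succ_{s_1}y$. This is consistent with every fact you list. Excluding it requires the best-laboratory condition of $\sigma$ at $N$, which makes $z$ full in $N$ with $N(z)\succ_z s_1$. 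It also requires a stability argument, e.g.\ comparing $z$'s partner sets in $N$, $M^\rho$ and $N\vee M^\rho$, using that laboratories are weakly worse off in a student-better stable matching.

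Step~3 is the heart of the theorem, and your sketch does not establish it. The phrase ``the pairs $(s_i,\ell_i)$ are still present'' presupposes that they were ever added on the way to $N$. If $(s_i,\ell_i)\notin M_0$, you need a single rotation adding it that lies on every path to an exposing matching, which is exactly Lemma~\ref{lem:uniqueness}(ii). Likewise, ``it would have to come after $\rho$'' is the chain property of Lemma~\ref{lem:chain-agent}.

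More seriously, the best-laboratory condition for $s_{i+1}$ depends on the worst assignees of \emph{all} laboratories. Some of these are changed only by rotations incomparable with $\rho$, which may lie in $R(N)\setminus R(M^\rho)$ or in $R(M^\rho)\setminus R(N)$. That every relevant change is caused by a $\lhd$-predecessor of $\rho$ is precisely what has to be proved; it is not a consequence of closedness. Forming $N\wedge M^\rho$ or $N\vee M^\rho$ does not help until you know how these operations act on rotation sets, and that is the isomorphism itself.

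Two ingredients are missing:
(a) a commutation lemma: distinct rotations exposed in the same matching are agent-disjoint, so eliminating one keeps the other exposed and the two eliminations commute;
(b) a genuine precedence lemma: if $\rho$ is not exposed in $K$ but is exposed in $K/\pi$, then $\pi\lhd\rho$, proved by analyzing which exposure condition of $\rho$ the elimination of $\pi$ repairs.

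Given these, take an elimination sequence from $N$ to $M_z$ in which $\rho$ occurs as early as possible. Such a sequence contains $\rho$ by the repaired Step~1, since $M_z\cap\rho^-=\emptyset$. Suppose some $\pi$ is eliminated immediately before $\rho$. Either $\rho$ was already exposed, and (a) lets you swap the two, contradicting minimality. Or (b) gives $\pi\lhd\rho$, hence $\pi\in R(N)$ by your hypothesis on $N$, contradicting the fact that no rotation is eliminated twice.
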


One way to represent $(\mathcal R(I),\unlhd)$ is to use a directed graph. In the literature, rotation digraphs are usually defined by explicit constructions. For our purpose, however, it is sufficient to use the following property as the definition.

\begin{definition}[Rotation digraph]
\label{def:rotation-digraph}
A rotation digraph of an instance $I$ is a directed acyclic graph $G=(\mathcal R(I),A_G)$ such that, for every $\rho,\rho'\in\mathcal R(I)$, we have $\rho\unlhd\rho'$ if and only if $\rho'$ is reachable from $\rho$ in $G$. Equivalently, the arc set $A_G$ contains the arcs of the Hasse diagram of $(\mathcal R(I),\unlhd)$ and is contained in the transitive closure of this Hasse diagram.
\end{definition}

Bansal--Agrawal--Malhotra \cite{BansalAgrawalMalhotra2007} gave an efficient construction of a rotation digraph in the many-to-many setting, which was later sped up by Eirinakis--Magos--Mourtos--Miliotis \cite{EirinakisMagosMourtosMiliotis2012}.

\begin{theorem}[Rotation digraph construction {\cite[Lemma~14]{EirinakisMagosMourtosMiliotis2012}}]
\label{thm:rotation-digraph-construction}
Given a many-to-many stable matching instance $I$, a rotation digraph $G=(\mathcal R(I),A_G)$ can be constructed in $O(n^2)$ time, where $n=\max\{|S|,|L|\}$. The constructed graph has $O(n^2)$ arcs.
\end{theorem}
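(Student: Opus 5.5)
The plan is to follow the Gusfield--Irving construction for the one-to-one case and check that each ingredient survives the passage to capacities, with ``worst assignee of $\ell$'' playing the role of ``partner of $\ell$''. There are three steps: (i) compute $M_0$ and enumerate $\mathcal R(I)$ by one traversal from $M_0$ to $M_z$; (ii) label acceptable pairs by the rotations that create or destroy them; (iii) read off two types of arcs from these labels and prove the characterization in Definition~\ref{def:rotation-digraph}.

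For (i), I would run the many-to-many deferred acceptance algorithm to obtain $M_0$. Then I would repeatedly find an exposed rotation and eliminate it until reaching $M_z$, using Theorem~\ref{thm:traversal} and Theorem~\ref{thm:isomorphism}. Each student $s$ keeps a pointer into its list marking the candidate ``next laboratory'' from Definition~\ref{def:rotation}. Each laboratory keeps its current worst assignee.
\begin{itemize}
\item A student's pointer only moves downward, because elimination worsens students (Lemma~\ref{lem:one-step-monotonicity}).
\item A laboratory's worst assignee only improves.
\end{itemize}
An amortized argument therefore bounds the total pointer movement by $|E|\le n^2$. Exposed rotations are cycles in the functional graph $s\mapsto$ (worst assignee of $s$'s next laboratory), and these cycles can be found with a stack-based walk as in the classical algorithm. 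Since every pair enters and leaves the matching at most once, this step also gives $|\mathcal R(I)|\le |S||L|$ and total rotation size $O(n^2)$.

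For (ii) and (iii), I record for each pair $(s,\ell)$ the unique rotation $\rho$ with $(s,\ell)\in\rho^+$, if any, and the unique rotation with $(s,\ell)\in\rho^-$, if any.
\begin{itemize}
\item \textbf{Type~1 arcs.} Add $\rho\to\rho'$ whenever some pair lies in $\rho^+\cap\rho'^-$; the pair must be created before it can be removed.
\item \textbf{Type~2 arcs.} Consider a rotation $\rho'$ that moves $s$ from $\ell$ to $\ell'$, and a laboratory $\ell''$ lying strictly between $\ell$ and $\ell'$ in $s$'s list. Let $\rho$ be the rotation after which $\ell''$'s worst assignee becomes preferred to $s$, so that $(s,\ell'')$ stops being a blocking candidate. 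Add the arc $\rho\to\rho'$.
\end{itemize}
By scanning each list once in the traversal order, the type~2 arcs can be restricted to $O(1)$ per (agent, list position), which keeps the total at $O(n^2)$.

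I expect correctness to be the main obstacle, namely that reachability in $G$ coincides exactly with $\unlhd$.
\begin{itemize}
\item \textbf{Soundness.} Each arc is a genuine precedence. For type~1 this is immediate. For type~2, eliminating $\rho'$ before $\rho$ would make $(s,\ell'')$ blocking, contradicting Theorem~\ref{thm:comparability} or stability.
\item \textbf{Completeness.} Every Hasse arc $\rho\lessdot\rho'$ must appear. I would show that if $\rho'$ is not reachable from $\rho$, then the closed set $\{\tau:\tau\lhd\rho'\}\setminus\{\text{rotations reachable from }\rho\}$ gives a stable matching in which $\rho'$ is exposed but $\rho$ has not been eliminated. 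To verify the exposure conditions of Definition~\ref{def:rotation} at that matching, I would use the fact that the worst assignee of each laboratory and the next laboratory of each student are determined only by the type~1 and type~2 predecessors.
\end{itemize}
The delicate point in the many-to-many setting is that a laboratory has several assignees, so ``worst assignee'' must be tracked carefully; this is where Theorem~\ref{thm:RHT} (fixed cardinalities) is used.
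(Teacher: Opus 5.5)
The paper gives no proof of this theorem; it is imported from \cite{EirinakisMagosMourtosMiliotis2012}, so your sketch has to stand on its own. Step~(i) is essentially fine. Steps~(ii)--(iii), however, have a genuine gap: the two Gusfield--Irving arc types do not generate $\unlhd$ once capacities exceed one. Consequently, the fact you plan to use for completeness is false, namely that the worst assignee of each laboratory and the next laboratory of each student are governed only by type~1 and type~2 predecessors. In the one-to-one case, consecutive rotations of any agent's chain share a pair: the pair created by the first is destroyed by the second. So chain precedences are automatically type~1 arcs, and with capacities this breaks.

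Take students $s_1,\dots,s_4$ of capacity~$1$, a laboratory $\ell$ of capacity~$2$, and laboratories $\ell_3,\ell_4$ of capacity~$1$. The lists are $s_1:\ell\succ\ell_4$, $s_2:\ell\succ\ell_3$, $s_3:\ell_3\succ\ell$, $s_4:\ell_4\succ\ell$, $\ell:s_4\succ s_3\succ s_1\succ s_2$, $\ell_3:s_2\succ s_3$, $\ell_4:s_1\succ s_4$. In $M_0$ everyone gets its first choice. The only rotations are $\rho=((s_2,\ell),(s_3,\ell_3))$ and $\rho'=((s_1,\ell),(s_4,\ell_4))$, and $\rho\lhd\rho'$, because $s_1$ becomes the worst assignee of $\ell$ only after $\rho$ has removed $s_2$. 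Yet $\rho^+\cap\rho'^-=\rho'^+\cap\rho^-=\emptyset$, and no laboratory lies strictly between the two entries of any student's list. So your graph has no arcs and declares $\rho,\rho'$ incomparable.

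Your type~2 rule is also unsound as stated. Since $\rho'^-(s)$ need not be the worst laboratory in $M(s)$, a laboratory $\ell''$ strictly between $\rho'^-(s)$ and $\rho'^+(s)$ may still belong to $M(s)$. Then $(s,\ell'')$ is not a blocking candidate, so your soundness argument does not apply. Worse, the rotation at which the worst assignee of $\ell''$ passes $s$ is the later one that removes $(s,\ell'')$, so the arc points backwards. For example, let $s$ have capacity~$2$ and all other agents capacity~$1$, with $s:\ell_a\succ\ell_b\succ\ell_t\succ\ell_u$, $t:\ell_t\succ\ell_a$, $u:\ell_u\succ\ell_b$, $\ell_a:t\succ s$, $\ell_b:u\succ s$, $\ell_t:s\succ t$, $\ell_u:s\succ u$. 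The rotations are $\rho'=((s,\ell_a),(t,\ell_t))\lhd\rho''=((s,\ell_b),(u,\ell_u))$, but your rules produce only the arc $\rho''\to\rho'$.

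A repair has three parts:
\begin{enumerate}
\item Add an arc between consecutive rotations of every chain $\mathcal C(a)$, $a\in S\cup L$. These arcs are sound by Lemma~\ref{lem:chain-agent}, number $O(n^2)$ by Corollary~\ref{cor:agent-chain-length}, and subsume type~1.
\item Restrict type~2 to laboratories $\ell''\notin M(s)$, where $M(s)$ is the assignment of $s$ at which $\rho'$ is exposed; this depends only on $\rho'$.
\item Prove completeness cover by cover. For $\rho\lessdot\rho'$, the set $\{\tau:\tau\lhd\rho'\}\setminus\{\rho\}$ is closed. If $\rho$ shares an agent with $\rho'$, it is that agent's chain predecessor of $\rho'$. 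Otherwise, eliminating $\rho$ leaves $M(s_i)$ and $M(\ell_i)$ unchanged for every agent of $\rho'$, so only the next laboratory of some $s_{i+1}$ can change, and this yields a restricted type~2 arc.
\end{enumerate}
This last step also needs the fact that $\rho'$ is exposed in the matching corresponding to $\{\tau:\tau\lhd\rho'\}$, which is not among the facts quoted in the paper.
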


\section{Proposed Algorithm}
\label{sec:algorithm}

This section presents our algorithmic results. Before describing our algorithm, we first prepare some additional structural properties that we exploit.

\subsection{Structural Observations}

For any rotation $\rho\in\mathcal R(I)$ and any student $s\in S$ involved in $\rho$, we write $\rho^-(s)$ and $\rho^+(s)$ for the unique laboratories such that $(s,\rho^-(s))\in\rho^-$ and $(s,\rho^+(s))\in\rho^+$. Similarly, for a laboratory $\ell\in L$ involved in $\rho$, we write $\rho^-(\ell)$ and $\rho^+(\ell)$ for the unique students such that $(\rho^-(\ell),\ell)\in\rho^-$ and $(\rho^+(\ell),\ell)\in\rho^+$.
For a set $X$ and an element $y$, abusing notation slightly, we write $X\succ_a y$ to mean that $x\succ_a y$ holds for every $x\in X$.
The following property follows directly from the definition of rotations.

\begin{lemma}
\label{lem:one-step-monotonicity}
Let $\rho$ be a rotation exposed in a stable matching $M$.
\begin{description}
    \item[\normalfont (i)] For any $s\in S$ involved in $\rho$, we have $M(s)\succ_s\rho^+(s)$. In particular, $\rho^-(s)\succ_s\rho^+(s)$.
    \item[\normalfont (ii)] For any $\ell\in L$ involved in $\rho$, we have $(M/\rho)(\ell)\succ_\ell\rho^-(\ell)$. In particular, $\rho^+(\ell)\succ_\ell\rho^-(\ell)$.
\end{description}
\end{lemma}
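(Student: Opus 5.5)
We argue directly from Definition~\ref{def:rotation} and Definition~\ref{def:rotation-elimination}, using stability of $M$ for part (i) and the ``worst assignee'' condition for part (ii). Throughout, write $\rho=((s_0,\ell_0),\ldots,(s_{t-1},\ell_{t-1}))$, exposed in $M$, with indices modulo $t$.

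For (i), fix $s=s_{i+1}$. Then $\rho^+(s)=\ell_i$ and $\rho^-(s)=\ell_{i+1}\in M(s)$. By the third condition of the definition, $\ell_i\notin M(s_{i+1})$, and $\ell_i$ prefers $s_{i+1}$ to its worst assignee $s_i$ in $M$. Suppose some $\ell'\in M(s)$ satisfies $\ell_i\succ_s \ell'$. Then $s$ prefers $\ell_i$ to $\ell'$, and $\ell_i$ prefers $s$ to $s_i\in M(\ell_i)$. Since $(s,\ell_i)\notin M$, the pair $(s,\ell_i)$ would be blocking, contradicting the stability of $M$. Preferences are strict and $\ell_i\notin M(s)$, so $\ell'\succ_s\ell_i$ for every $\ell'\in M(s)$, that is, $M(s)\succ_s\rho^+(s)$. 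The particular case follows because $\rho^-(s)\in M(s)$.

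For (ii), fix $\ell=\ell_i$. Then $\rho^-(\ell)=s_i$ and $\rho^+(\ell)=s_{i+1}$, and $(M/\rho)(\ell)=(M(\ell)\setminus\{s_i\})\cup\{s_{i+1}\}$, since each agent appears at most once in $\rho$. By the second condition, $s_i$ is the worst student in $M(\ell)$, so $s\succ_\ell s_i$ for every $s\in M(\ell)\setminus\{s_i\}$. By the third condition, $s_{i+1}\succ_\ell s_i$. Hence every member of $(M/\rho)(\ell)$ is strictly preferred to $s_i$. The particular case follows because $s_{i+1}\in(M/\rho)(\ell)$.

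The only subtle step is in (i): we need $\ell_i\notin M(s)$ and the blocking-pair definition, which requires $(s,\ell)\in E\setminus M$. This membership is supplied explicitly by the rotation definition, so the argument goes through.
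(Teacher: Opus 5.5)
Your proof is correct and follows essentially the same route as the paper: part (i) derives a contradiction with the stability of $M$ from the would-be blocking pair $(s,\rho^+(s))$, and part (ii) combines the worst-assignee condition with $(M/\rho)(\ell)=(M(\ell)\setminus\{\rho^-(\ell)\})\cup\{\rho^+(\ell)\}$. Your explicit indexing ($s=s_{i+1}$, $\rho^+(s)=\ell_i$, $\rho^-(\ell_i)=s_i$) only makes the paper's argument more concrete.
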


\begin{proof}
By the definition of a rotation, we have $\rho^+(s)\notin M(s)$, and $\rho^+(s)$ prefers $s$ to its worst assignee in $M$. If $s$ preferred $\rho^+(s)$ to some laboratory in $M(s)$, then $(s,\rho^+(s))$ would block $M$. Hence, $s$ prefers every element in $M(s)$ over $\rho^+(s)$, i.e., $M(s)\succ_s\rho^+(s)$.

By the definition of a rotation, $\rho^-(\ell)$ is the worst student in $M(\ell)$, and $\ell$ prefers $\rho^+(\ell)$ to $\rho^-(\ell)$. Since $(M/\rho)(\ell)=M(\ell)\setminus\{\rho^-(\ell)\}\cup\{\rho^+(\ell)\}$, we have $(M/\rho)(\ell)\succ_\ell\rho^-(\ell)$.
\end{proof}

The following lemma consists of three statements, the first of which was shown as Lemma~6 by Bansal--Agrawal--Malhotra \cite{BansalAgrawalMalhotra2007}. We provide proofs of the latter two in Appendix~\ref{app:omitted}.

\begin{restatable}{lemma}{uniquenesslem}
\label{lem:uniqueness}
The following statements hold.
\begin{description}
\item[\normalfont (i)] For every pair $(s,\ell)$, there is at most one rotation $\rho$ such that $(s,\ell)\in\rho^-$.
\item[\normalfont (ii)] For every pair $(s,\ell)$, there is at most one rotation $\rho$ such that $(s,\ell)\in\rho^+$.
\item[\normalfont (iii)] There is no rotation $\rho$ such that $M_z\cap\rho^-\neq\emptyset$ or $M_0\cap\rho^+\neq\emptyset$.
\end{description}
\end{restatable}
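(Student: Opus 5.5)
Part (i) is \cite[Lemma~6]{BansalAgrawalMalhotra2007}, so I will take it as given and prove (ii) and (iii). The tool is a monotonicity principle. Consider any sequence of rotation eliminations $M_0=N_0,N_1,\dots,N_k$ as in Theorem~\ref{thm:traversal}. By Lemma~\ref{lem:one-step-monotonicity}, each student's partner set weakly worsens along the sequence, and each laboratory's partner set weakly improves. Concretely, a student $s$ loses $\rho^-(s)$ and gains $\rho^+(s)$, where $M(s)\succ_s\rho^+(s)$. Hence once a pair $(s,\ell)$ leaves the matching, $s$ only ever receives laboratories worse than $\ell$ afterwards, and so $(s,\ell)$ can never re-enter. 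Symmetrically, once $(s,\ell)$ enters, $\ell$ only gains students better than those it drops, so the pair can never leave again: $\ell$ always drops its worst assignee, and every student added later is better than one it dropped.

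I will make the "can never re-enter" claim precise by induction along the sequence. Suppose $(s,\ell)\in\rho^-$ is removed at step $i$. At every later step $j$, Lemma~\ref{lem:one-step-monotonicity}(i) gives $N_{j-1}(s)\succ_s\rho_j^+(s)$, and $N_{j-1}(s)$ contains only laboratories that are weakly below what $s$ held at step $i$ in the $\succeq_s$ sense. I would rather argue this via $\ell$. At step $i$, laboratory $\ell$ drops $s$ because $s$ is its worst assignee. After that, $\ell$'s assignees stay pairwise better than $s$: each later drop removes the current worst, and each addition is better than the dropped one. So $\ell$ never re-accepts $s$. The dual argument for students shows that a pair added at step $i$ is never removed later. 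Consequently, along any single elimination sequence, each pair is added at most once and removed at most once.

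For (ii), suppose $(s,\ell)\in\rho_1^+\cap\rho_2^+$ with $\rho_1\neq\rho_2$. I plan to use Theorem~\ref{thm:isomorphism} to build one sequence from $M_0$ to $M_z$ that eliminates all rotations, each exactly once, in a linear extension of $\unlhd$. Both $\rho_1$ and $\rho_2$ then occur in this sequence and both add $(s,\ell)$. Between those two additions the pair must have been removed, which contradicts the fact that an added pair is never removed. One point needs checking: the pair is not already present when $\rho_2$ adds it. This holds because $\rho_2^+(s)\notin N(s)$ by the definition of a rotation.

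For (iii), take this same full sequence ending at $M_z$, which uses all rotations by Theorem~\ref{thm:isomorphism}. If $(s,\ell)\in M_z\cap\rho^-$, then $\rho$ removes the pair, and the pair must be present at the end. So it must be re-added later, contradicting the never-re-enter claim. If $(s,\ell)\in M_0\cap\rho^+$, then the pair is present at the start and is later added by $\rho$, so it must have been removed in between. That contradicts the never-leave claim.

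The main obstacle is the careful invariant for laboratories in the many-to-many setting. Specifically, I need that after $\ell$ drops $s$, all of $\ell$'s assignees remain strictly better than $s$, and the analogous statement for students with $\succ_s$. Establishing this requires combining Lemma~\ref{lem:one-step-monotonicity} with the fact that $\rho^-(\ell)$ is always the current worst assignee.
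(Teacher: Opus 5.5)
There is a genuine error: your ``never-leave'' claim is false, and your proofs of (ii) and of the second half of (iii) both invoke it. The claim says that a pair added at step $i$ is never removed later, supposedly by a dual argument for students. For a counterexample, take the one-to-one instance with $s_i:\ell_i\succ\ell_{i+1}\succ\ell_{i+2}$ and $\ell_i:s_{i+1}\succ s_{i+2}\succ s_i$ (indices modulo $3$). It has exactly three stable matchings, forming a chain obtained by eliminating two rotations $\rho_1,\rho_2$ in turn. Along this chain $s_1$ is matched successively to $\ell_1,\ell_2,\ell_3$, so $(s_1,\ell_2)\in\rho_1^+\cap\rho_2^-$ is added and later removed. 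There is no student-side dual. Lemma~\ref{lem:one-step-monotonicity}(i) only says that each newly gained laboratory is worse than everything $s$ currently holds. That does not prevent a later rotation $\rho$ with $\rho^-(s)$ equal to a previously gained laboratory. So the ``analogous statement for students'' that you list as the remaining obstacle cannot be proved.

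You do not need it, though. Your no-re-entry claim is proved correctly, via the invariant that after $\ell$ drops $s$ all of $\ell$'s assignees stay strictly better than $s$. That claim alone forces the presence of a fixed pair along any elimination sequence to follow the pattern absent, present, absent, with each phase possibly empty. This is exactly your conclusion that a pair is added at most once and removed at most once.
\begin{itemize}
\item In (ii), the pair is present right after $\rho_1$ and absent just before $\rho_2$, since $\rho_2^+(s)\notin N(s)$. So it was removed in between and then re-added by $\rho_2$, contradicting no re-entry.
\item In the second half of (iii), the pair is present in $M_0$ and absent when $\rho$ is eliminated. So again it is removed and then re-added by $\rho$.
\end{itemize}

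With these replacements your argument is correct, and it takes a different route from the paper. The paper proves (iii) using comparability (Theorem~\ref{thm:comparability}), the Rural Hospitals Theorem, and the optimality of $M_0$ and pessimality of $M_z$. It then obtains (ii) by counting additions and removals of each pair along a full elimination sequence, using (i) and (iii). Your repaired route uses only Lemma~\ref{lem:one-step-monotonicity} and the definition of a rotation along one full elimination sequence, and the same reasoning also yields (i).
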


The following property is known in the one-to-one case \cite[Claim~4.6]{karlin2018simply} and was also shown in the many-to-one case \cite[Lemma~17]{FaenzaFoussoulHe2025}. We include a proof for the many-to-many case.

\begin{lemma}
\label{lem:chain-agent}
For any agent $a\in S\cup L$, all rotations involving agent $a$ are pairwise comparable in the poset $(\mathcal R(I),\unlhd)$.
\end{lemma}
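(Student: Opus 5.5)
The plan is to argue by contradiction. Suppose two rotations $\rho_1,\rho_2$ both involve agent $a$ and are incomparable in $(\mathcal R(I),\unlhd)$. Then the closed set generated by $\rho_1$ and $\rho_2$, namely $Y=\{\rho:\rho\lhd\rho_1\text{ or }\rho\lhd\rho_2\}$, is closed. The same holds after adding either rotation: $Y\cup\{\rho_1\}$ and $Y\cup\{\rho_2\}$ are closed, and $\rho_2\notin Y\cup\{\rho_1\}$ and $\rho_1\notin Y\cup\{\rho_2\}$ by incomparability.

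Let $M$ be the stable matching corresponding to $Y$ under Theorem~\ref{thm:isomorphism}. Using that bijection together with Theorem~\ref{thm:traversal}, I would argue that both $\rho_1$ and $\rho_2$ are exposed in $M$. Each $\rho_i$ is a minimal element of the complement of $Y$. Eliminating $Y$ and then $\rho_i$ is a valid order consistent with $\unlhd$, so $\rho_i$ must be exposed at $M$ for the elimination to be defined. (Strictly, the correspondence says that eliminating $Y\cup\{\rho_i\}$ in a consistent order yields a stable matching, and I would check that the last step is elimination of an exposed rotation.)

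We thus have two distinct rotations exposed simultaneously in $M$, both containing $a$. I derive a contradiction from Definition~\ref{def:rotation}.

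If $a=\ell\in L$, then in each $\rho_i$ the pair $(s_i,\ell)\in\rho_i^-$ has $s_i$ equal to the worst student in $M(\ell)$. This student is unique, so both rotations contain the pair $(s,\ell)$ in $\rho^-$, contradicting Lemma~\ref{lem:uniqueness}(i).

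If $a=s\in S$, then $s$ appears as $s_{i+1}$ in each rotation, and its successor laboratory $\rho_i^+(s)$ is determined by $M$ as the best laboratory $\ell\notin M(s)$ that prefers $s$ to its worst assignee. This laboratory is again unique, so $(s,\ell)\in\rho_1^+\cap\rho_2^+$, contradicting Lemma~\ref{lem:uniqueness}(ii).

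Hence any two rotations involving $a$ are comparable.

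The main obstacle is the exposure claim. The bijection of Theorem~\ref{thm:isomorphism} is stated for whole closed sets, not for individual elimination steps. One needs that if $X$ and $X\cup\{\rho\}$ are both closed, then $\rho$ is exposed in the matching $M_X$. I would justify this via the second sentence of Theorem~\ref{thm:isomorphism}: elimination "in any order consistent with $\unlhd$" presupposes that each step eliminates an exposed rotation, and we may choose an order ending with $\rho$. If that reading is too loose, I would fall back on the definition of $\lhd$ (Definition~\ref{def:rotation-poset}) combined with Theorem~\ref{thm:traversal} to show directly that the rotations exposed at $M_X$ are exactly the minimal elements of $\mathcal R(I)\setminus X$.
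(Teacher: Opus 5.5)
Your proof is correct, but it takes a different route from the paper's.

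\textbf{The paper's route.} The paper never places $\rho_1$ and $\rho_2$ at a common matching. Assuming $\rho_1\ntriangleleft\rho_2$, it follows an elimination sequence in which $\rho_2$ is eliminated before $\rho_1$ becomes exposed. Using Lemma~\ref{lem:one-step-monotonicity} and the fact that $s$'s assignment only worsens along the sequence (resp.\ $\ell$'s only improves), it derives $\rho_2^+(s)\succ_s\rho_1^+(s)$ (resp.\ $\rho_1^-(\ell)\succ_\ell\rho_2^-(\ell)$). The symmetric assumption gives the opposite strict preference, so one of the two precedences must hold.

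\textbf{Your route.} You use that incomparable rotations are both minimal in the complement of the closed set $Y$, hence both exposed at $M_Y$. You then use that $\rho^+(s)$ and $\rho^-(\ell)$ are determined by $M$ alone.

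\textbf{What each buys.} Your argument is static: it needs no monotonicity along elimination sequences, and it actually proves the stronger fact that distinct rotations exposed at the same stable matching share no agent. The cost is that Lemma~\ref{lem:uniqueness}(i),(ii) become load-bearing; part (ii) rests on a nontrivial counting argument. The paper uses these parts only to record $\ell_1\neq\ell_2$ and $s_1\neq s_2$, which its argument does not really need. You could bypass the lemma entirely: given $M$, Definition~\ref{def:rotation} lets you trace an exposed rotation backwards, $s_{i+1}\mapsto\ell_i\mapsto s_i\mapsto\ell_{i-1}\mapsto\cdots$, so any single agent determines the whole rotation. The paper's route, in turn, shows as a by-product that the chain order agrees with $s$'s preference over the laboratories $\rho^+(s)$ and with $\ell$'s preference over the students $\rho^-(\ell)$.

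\textbf{The exposure step.} The step you flag as the main obstacle relies on reading Theorem~\ref{thm:isomorphism} as saying that each step of an order consistent with $\unlhd$ eliminates an exposed rotation. The paper relies on exactly this reading when it extends elimination sequences to all of $\mathcal R(I)$, so your proof is at the same level of rigor. You should add one check: $\rho_i$ lies below no element of $Y$, since otherwise transitivity would give $\rho_i\lhd\rho_i$ or a comparability between $\rho_1$ and $\rho_2$. This confirms that placing $\rho_i$ last is consistent with $\unlhd$.
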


\begin{proof}
We first consider a student $s$. Let $\rho_1$ and $\rho_2$ be two distinct rotations involving $s$, and write $\ell_j=\rho_j^+(s)$ for $j=1,2$. By Lemma~\ref{lem:uniqueness}(ii), we have $\ell_1\neq \ell_2$.
Suppose that $\rho_1\ntriangleleft\rho_2$. By Theorem~\ref{thm:isomorphism} applied to $X=\R(I)$, every elimination sequence can be extended to include all rotations. Thus, $\rho_1\ntriangleleft\rho_2$ implies that there is a sequence $M_0\succeq_S M_1\succeq_S\cdots\succeq_S M_k$ of stable matchings obtained by successive rotation eliminations such that, for some $i<k$, the rotation $\rho_2$ is exposed in $M_i$ and $M_{i+1}=M_i/\rho_2$, while $\rho_1$ is exposed in $M_k$.
From $M_{i+1}=M_i/\rho_2$, we obtain $\ell_2=\rho^+_2(s)\in M_{i+1}(s)$.
Since $\rho_1$ is exposed in $M_k$, Lemma~\ref{lem:one-step-monotonicity}(i) implies $M_k(s)\succ_s\rho_1^+(s)=\ell_1$.
Since $M_{i+1}(s)\succeq_s M_k(s)$, we have $\ell_2\in M_{i+1}(s)\succeq_s M_k(s)\succ_s \ell_1$, which implies $\ell_2\succ_s\ell_1$.
By the same argument, $\rho_2\ntriangleleft\rho_1$ implies $\ell_1\succ_s\ell_2$. Thus, it is impossible to have both $\rho_1\ntriangleleft\rho_2$ and $\rho_2\ntriangleleft\rho_1$.

We next consider a laboratory $\ell$. Let $\rho_1$ and $\rho_2$ be two distinct rotations involving $\ell$, and write $s_j=\rho_j^-(\ell)$ for $j=1,2$. By Lemma~\ref{lem:uniqueness}(i), we have $s_1\neq s_2$.
Suppose that $\rho_1\ntriangleleft\rho_2$. By the same argument as above, there exists a sequence of stable matchings $M_0\succeq_S M_1\succeq_S\cdots\succeq_S M_k$ obtained by successive rotation eliminations such that, for some index $i<k$, the rotation $\rho_2$ is exposed in $M_i$ and $M_{i+1}=M_i/\rho_2$, while $\rho_1$ is exposed in $M_k$.
From $M_{i+1}=M_i/\rho_2$, Lemma~\ref{lem:one-step-monotonicity}(ii) implies $M_{i+1}(\ell)\succ_{\ell} \rho_2^-(\ell)=s_2$. This property is preserved under later rotation eliminations involving $\ell$, since each such rotation replaces a currently assigned student with one preferred by $\ell$. Hence $M_k(\ell)\succ_\ell s_2$. Since $\rho_1$ is exposed in $M_k$, we have $s_1=\rho_1^-(\ell)\in M_k(\ell)$, and hence $s_1\succ_\ell s_2$.
By the same argument, $\rho_2\ntriangleleft\rho_1$ implies $s_2\succ_\ell s_1$. Therefore, it is impossible to have both $\rho_1\ntriangleleft\rho_2$ and $\rho_2\ntriangleleft\rho_1$.
\end{proof}

For a poset $(\mathcal R(I),\unlhd)$, a {\em chain} is a set of mutually comparable rotations. A chain of length $k$ can be represented as $\rho_1\lhd\rho_2\lhd\cdots\lhd\rho_k$ when its elements are indexed appropriately.

\begin{corollary}
\label{cor:agent-chain-length}
For each agent $a\in S\cup L$, the set of rotations involving agent $a$ forms a chain $\mathcal C(a)$ in $(\mathcal R(I),\unlhd)$, and its length is at most $n$.
\end{corollary}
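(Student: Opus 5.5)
The chain part is immediate from Lemma~\ref{lem:chain-agent}: all rotations involving $a$ are pairwise comparable in $(\mathcal R(I),\unlhd)$, so $\mathcal C(a)$ is a chain. What remains is the bound $|\mathcal C(a)|\le n$. The plan is to map each rotation in $\mathcal C(a)$ injectively to an agent on the opposite side, using the uniqueness statements of Lemma~\ref{lem:uniqueness}.

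For a student $s$, send each $\rho\in\mathcal C(s)$ to the laboratory $\rho^+(s)$. By Lemma~\ref{lem:uniqueness}(ii), the pair $(s,\rho^+(s))$ lies in $\rho^+$ for at most one rotation $\rho$. Hence distinct rotations involving $s$ have distinct values $\rho^+(s)$, and this map is injective into $L$. This gives $|\mathcal C(s)|\le |L|\le n$.

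For a laboratory $\ell$, argue symmetrically. Send each $\rho\in\mathcal C(\ell)$ to the student $\rho^-(\ell)$. By Lemma~\ref{lem:uniqueness}(i), the pair $(\rho^-(\ell),\ell)$ belongs to $\rho^-$ for at most one rotation. So the map is injective into $S$, and $|\mathcal C(\ell)|\le |S|\le n$.

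There is no real obstacle, since the needed uniqueness is already provided by Lemma~\ref{lem:uniqueness}. The only point to check is that $\rho^\pm(a)$ is well defined. This holds because each agent appears at most once in a rotation (Definition~\ref{def:rotation}), so a rotation involving $a$ contains exactly one pair of $\rho^-$ and exactly one pair of $\rho^+$ containing $a$.
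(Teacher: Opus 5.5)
Your proposal is correct and matches the paper's proof: you get the chain property from Lemma~\ref{lem:chain-agent}, then bound the length by the injections $\rho\mapsto\rho^+(s)$ into $L$ (via Lemma~\ref{lem:uniqueness}(ii)) and $\rho\mapsto\rho^-(\ell)$ into $S$ (via Lemma~\ref{lem:uniqueness}(i)). Your added check that $\rho^\pm(a)$ is well defined is a harmless extra detail that the paper leaves implicit.
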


\begin{proof}
The chain property is immediate from Lemma~\ref{lem:chain-agent}. If $a=s\in S$, then distinct rotations involving $s$ have distinct $\rho^+(s)$ by Lemma~\ref{lem:uniqueness}(ii), and hence there are at most $|L|$ of them. If $a=\ell\in L$, then distinct rotations involving $\ell$ have distinct $\rho^-(\ell)$ by Lemma~\ref{lem:uniqueness}(i), and hence there are at most $|S|$ of them. Therefore, the chain length is at most $\max\{|S|,|L|\}=n$.
\end{proof}

For any agent $a\in S\cup L$, we write $\mathcal C(a)$ for the chain of rotations involving $a$.

Recall that, by Theorem~\ref{thm:isomorphism}, for any closed subset $X$ of $(\mathcal R(I),\unlhd)$, eliminating $X$ from $M_0$ defines a stable matching, and this correspondence is bijective. Let $M_X\in\mathcal S(I)$ be the stable matching corresponding to a closed set $X$.
For any agent $a\in S\cup L$, only rotations involving $a$ can change the assignment of $a$, and hence the set $M_X(a)$ depends only on $X\cap\mathcal C(a)$. That is, for closed sets $X,Y\subseteq\mathcal R(I)$, $X\cap\mathcal C(a)=Y\cap\mathcal C(a)$ implies $M_X(a)=M_Y(a)$.

For each $a\in A$ and each rotation $\rho\in\mathcal C(a)$, we define
\[
\hat v_a(\rho)\coloneqq v_a(M_X(a)),
\quad\text{where $X\subseteq\mathcal R(I)$ is any closed set with $\rho=\max_{\unlhd}(X\cap\mathcal C(a))$}.
\]
This is well-defined because, for a closed set $X$, the set $X\cap\mathcal C(a)$ is a prefix of the chain $\mathcal C(a)$, and hence its maximum determines this intersection uniquely.

\begin{lemma}
\label{lem:value-monotonicity}
For any $s\in S\cap A$ and any $\ell\in L\cap A$, we have the following.
\begin{itemize}
    \item $\hat v_s$ is monotone non-increasing on $\mathcal C(s)$; i.e., $\rho\lhd\rho'$ implies $\hat v_s(\rho)\geq \hat v_s(\rho')$.
    \item $\hat v_\ell$ is monotone non-decreasing on $\mathcal C(\ell)$; i.e., $\rho\lhd\rho'$ implies $\hat v_\ell(\rho)\leq \hat v_\ell(\rho')$.
\end{itemize}
\end{lemma}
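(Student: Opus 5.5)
The plan is to reduce the statement to a comparison of two partner sets under $\succeq_a$, and then apply the consistency condition ($\star$). Fix $a\in A$ and rotations $\rho\lhd\rho'$ in $\mathcal C(a)$.

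\textbf{Step 1: choose nested closed sets realizing both values.} Let $X$ be the down-closure of $\rho$ and $X'$ the down-closure of $\rho'$. Both are closed, and $X\subseteq X'$ because $\rho\lhd\rho'$. Any element of $X\cap\mathcal C(a)$ lies below $\rho$, and $\rho$ itself belongs to $X\cap\mathcal C(a)$. Hence $\rho=\max_{\unlhd}(X\cap\mathcal C(a))$, and in the same way $\rho'=\max_{\unlhd}(X'\cap\mathcal C(a))$. By definition, $\hat v_a(\rho)=v_a(M_X(a))$ and $\hat v_a(\rho')=v_a(M_{X'}(a))$.

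\textbf{Step 2: compare $M_X$ and $M_{X'}$ in the lattice.} By Theorem~\ref{thm:isomorphism}, we may eliminate the rotations of $X$ from $M_0$ in an order consistent with $\unlhd$. We then continue with the rotations of $X'\setminus X$ in a consistent order. This works because $X'$ is closed, so every prefix of such an order is closed. Each single elimination satisfies $M\succeq_S M/\rho''$ by Lemma~\ref{lem:one-step-monotonicity}. Since $\succeq_S$ is a partial order (Theorem~\ref{thm:lattice}), transitivity gives $M_X\succeq_S M_{X'}$, that is, $M_X(s)\succeq_s M_{X'}(s)$ for every student $s$.

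\textbf{Step 3: conclude for students.} For $a=s\in S$, the relation $M_X(s)\succeq_s M_{X'}(s)$ says exactly that the hypothesis of ($\star$) holds with $X$ replaced by $M_X(s)$ and $Y$ by $M_{X'}(s)$. The sizes agree by Theorem~\ref{thm:RHT}. So ($\star$) gives $\hat v_s(\rho)\ge\hat v_s(\rho')$.

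\textbf{Step 4: conclude for laboratories.} For $a=\ell\in L$ I need the reverse relation $M_{X'}(\ell)\succeq_\ell M_X(\ell)$. This is the main obstacle, because the lattice order is stated only from the students' side. I will argue directly along the elimination sequence from $M_X$ to $M_{X'}$. Each rotation involving $\ell$ removes the worst student of the current set and inserts a student that $\ell$ prefers to it (Definition~\ref{def:rotation} and Lemma~\ref{lem:one-step-monotonicity}(ii)). If $Z\succeq_\ell Y$, then $Z'=Z\setminus\{\min Z\}\cup\{x\}$ with $x\succ_\ell\min Z$ still satisfies $Z'\succeq_\ell Y$. Indeed, any $y\in Y\setminus Z'$ is either in $Y\setminus Z$ or equals $\min Z$. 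In the first case $y$ is below all of $Z$, and in particular below $x$. In the second case $y=\min Z$ is below $x$ and below the rest of $Z$. Elements of $Z'$ other than $x$ lie in $Z$, so the required inequalities hold for all of $Z'$. Rotations not involving $\ell$ leave its set unchanged. Induction from $Y=M_X(\ell)$ then gives $M_{X'}(\ell)\succeq_\ell M_X(\ell)$, and ($\star$) yields $\hat v_\ell(\rho)\le\hat v_\ell(\rho')$.

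Alternatively, Theorem~\ref{thm:comparability} together with the standard opposition of interests would give the same reverse relation. The direct argument avoids needing that fact.
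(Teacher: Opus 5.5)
Your proof is correct and follows essentially the same route as the paper. Both realize the two values through the nested down-closures $\cl(\rho)\subseteq\cl(\rho')$, chain the one-step comparisons from Lemma~\ref{lem:one-step-monotonicity} along an elimination sequence to get the appropriate $\succeq_a$ relation, and finish with ($\star$). Your Step~2 justification of the elimination order and your Step~4 preservation argument for laboratories simply spell out what the paper asserts in one line.
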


\begin{proof}
Let $a\in A$. If $\mathcal C(a)=\emptyset$, there is nothing to prove. Otherwise, let $\rho_1\lhd\rho_2\lhd\cdots\lhd\rho_k$ be the chain $\mathcal C(a)$. For $i=1,2,\ldots,k$, let $M_i$ be the stable matching corresponding to the closed set $X_i=\set{\tilde{\rho}\in\mathcal R(I)|\, \tilde{\rho}\, \unlhd \rho_i}$. Then $\hat v_a(\rho_i)=v_a(M_i(a))$. Note that $X_1\subseteq X_2\subseteq\cdots\subseteq X_k$. By Lemma~\ref{lem:one-step-monotonicity}, rotation elimination can only change the assignment of a student to a worse one and that of a laboratory to a better one. Hence, we have $M_1(a)\succeq_a M_2(a)\succeq_a\cdots\succeq_a M_k(a)$ if $a\in S$, and $M_k(a)\succeq_a M_{k-1}(a)\succeq_a\cdots\succeq_a M_1(a)$ if $a\in L$. Since $v_a$ satisfies condition ($\star$), we have $v_a(M_i(a))\geq v_a(M_j(a))$ whenever $M_i(a)\succeq_a M_j(a)$. The claims follow.
\end{proof}

\subsection{Algorithm Description}

Utilizing the properties shown in the previous subsection, we provide an efficient algorithm for finding a stable matching with minimum utility gap.

For each agent $a\in A$, let
$\C(a)=\rho^a_1\lhd \rho^a_2\lhd\cdots\lhd \rho^a_{k_a}$
be the chain of rotations involving $a$, where $k_a=|\C(a)|$.
For notational convenience, we introduce a dummy element $\rho^a_0$, which is not a rotation, and extend the notation $\hat v_a$ by setting $\hat v_a(\rho^a_0)=v_a(M_0(a))$.
Observe that, for any closed set $X\subseteq \R(I)$, the set $X\cap \C(a)$ is a prefix of $\C(a)$.
Thus, if $|X\cap \C(a)|=i$, then $v_a(M_X(a))=\hat v_a(\rho^a_i)$.

Our main algorithm, given as Algorithm~\ref{alg:minimum-gap}, repeatedly calls Algorithm~\ref{alg:interval-feasibility} as a subroutine, which tests the feasibility of a given interval.
More precisely, for a given $[\Blow,\Bhigh]$, it decides whether there exists $M\in \S(I)$ such that
$\Blow\leq v_a(M(a))\leq \Bhigh$ for every $a\in A$.

\begin{algorithm}[t]
\caption{Interval Feasibility (A subroutine used in Algorithm~\ref{alg:minimum-gap})}
\label{alg:interval-feasibility}
\SetKwInOut{Input}{Input}
\SetKwInOut{Output}{Output}
\Input{A rotation digraph $G$ of $I$, chains $\C(a)=\rho^a_1\lhd\cdots\lhd\rho^a_{k_a}$ and values $\hat v_a(\rho^a_i)$ for all $a\in A$ and $0\leq i\leq k_a$, and bounds $\Blow,\Bhigh$.}
\Output{Either a closed set $X\subseteq\R(I)$ corresponding to a stable matching $M_X$ with $\Blow\leq v_a(M_X(a))\leq \Bhigh$ for every $a\in A$, or {\sc No}.}
\BlankLine
$F\gets\emptyset$ and $D\gets\emptyset$\tcp*{{\small $F$:\,forced rotations; $D$:\,disallowed rotations}}
\ForEach{$a\in A$}{
    Let $I_a=\set{i\in\{0,1,\ldots,k_a\}\mid \Blow\leq \hat v_a(\rho^a_i)\leq \Bhigh}$\;\label{line:interval}
    \If{$I_a=\emptyset$}{
        \Return {\sc No}\;
    }
    $\alpha\gets \min I_a$ and $\beta\gets \max I_a$\;
    \If{$\alpha>0$}{
        $F\gets F\cup\{\rho^a_{\alpha}\}$\;
    }
    \If{$\beta<k_a$}{
        $D\gets D\cup\{\rho^a_{\beta+1}\}$\;
    }
}
Compute the downward closure $\cl(F)=\set{\tau\in\R(I)\mid \tau\unlhd\rho \text{ for some } \rho\in F}$ of $F$ by a search in the reverse direction of the rotation digraph $G$\;
\If{$\cl(F)\cap D\neq\emptyset$}{
    \Return {\sc No}\;
}
\Return $\cl(F)$\;
\end{algorithm}

\begin{lemma}
\label{lem:interval-feasibility}
For any given interval $[\Blow,\Bhigh]$, Algorithm~\ref{alg:interval-feasibility} correctly determines its feasibility in $O(n^2)$ time.
\end{lemma}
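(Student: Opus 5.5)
Fix an interval $[\Blow,\Bhigh]$. The plan is to show two things: if the algorithm returns a set $X=\cl(F)$, then $X$ is closed and $M_X$ satisfies every bound; and if it returns {\sc No}, then no feasible stable matching exists. The basic tool is the correspondence of Theorem~\ref{thm:isomorphism}, together with the prefix observation: for a closed set $X$, we have $|X\cap\C(a)|=i$ exactly when $v_a(M_X(a))=\hat v_a(\rho^a_i)$.

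The first step uses Lemma~\ref{lem:value-monotonicity}. Along $\C(a)$, the sequence $\hat v_a(\rho^a_0),\hat v_a(\rho^a_1),\dots,\hat v_a(\rho^a_{k_a})$ is monotone: non-increasing for students and non-decreasing for laboratories. The dummy index $0$ fits into this, since $M_0(a)$ is the assignment before any rotation of $\C(a)$ is eliminated, so Lemma~\ref{lem:one-step-monotonicity} still applies. Consequently $I_a$ is a contiguous index interval $\{\alpha,\dots,\beta\}$. A closed set $X$ then satisfies the bound for $a$ if and only if $\alpha\le|X\cap\C(a)|\le\beta$. Because $X\cap\C(a)$ is a prefix of the chain, this holds if and only if $\rho^a_\alpha\in X$ (when $\alpha>0$) and $\rho^a_{\beta+1}\notin X$ (when $\beta<k_a$). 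Hence a stable matching is feasible exactly when its closed set contains $F$ and avoids $D$. If some $I_a$ is empty, then no index works for $a$, so the answer {\sc No} is correct.

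The second step handles the remaining cases. Any closed set containing $F$ contains $\cl(F)$. So if $\cl(F)\cap D\neq\emptyset$, every closed superset of $F$ meets $D$, and {\sc No} is correct. Otherwise $\cl(F)$ is itself closed, contains $F$, and avoids $D$, so $M_{\cl(F)}$ is feasible by the characterization in the first step. The main point to check carefully is this contiguity/prefix equivalence, including the boundary cases $\alpha=0$ and $\beta=k_a$; everything else is straightforward.

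For the running time, we have $|A|\le 2n$ and each $k_a\le n$ by Corollary~\ref{cor:agent-chain-length}. Computing $I_a$ by scanning the given values therefore costs $O(n)$ per agent, and $O(n^2)$ in total. The reverse search in $G$ takes time linear in its size, which is $O(n^2)$ by Theorem~\ref{thm:rotation-digraph-construction} together with $|\R(I)|=O(n^2)$. Testing $\cl(F)\cap D$ with a marker array also costs $O(n^2)$.
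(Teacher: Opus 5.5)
Your proof is correct and follows the paper's argument essentially step for step. The steps are the same: contiguity of $I_a$ via Lemma~\ref{lem:value-monotonicity}, the prefix reformulation into forced and disallowed rotations, $\cl(F)$ as the canonical candidate, and the $O(n^2)$ bound from the chain lengths and the size of the rotation digraph; your explicit justification for including the dummy index $0$ is a point the paper leaves implicit.

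One minor slip: the opening claim that $|X\cap\C(a)|=i$ holds \emph{exactly when} $v_a(M_X(a))=\hat v_a(\rho^a_i)$ is true only in the forward direction, since values may repeat along the chain. Only that forward direction is actually used, so the argument is unaffected.
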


\begin{proof}
By Lemma~\ref{lem:value-monotonicity}, for any agent $a\in A$, the sequence
$\hat v_a(\rho^a_0),\hat v_a(\rho^a_1),\ldots,\hat v_a(\rho^a_{k_a})$
is monotone non-increasing if $a\in S$ and monotone non-decreasing if $a\in L$.
Hence, for a fixed interval $[\Blow,\Bhigh]$, the set $I_a$ defined in line~\ref{line:interval} is consecutive; that is, whenever $I_a\neq\emptyset$, it can be written as $I_a=\{\alpha,\alpha+1,\ldots,\beta\}$ for some $\alpha\leq\beta$.

A closed set $X$ satisfies the utility bound for agent $a$ if and only if $\alpha\leq |X\cap \C(a)|\leq\beta$.
Since $X\cap \C(a)$ is a prefix of $\C(a)$, the condition $|X\cap \C(a)|\geq\alpha$ is equivalent to requiring $\rho^a_{\alpha}\in X$ when $\alpha>0$, while the condition $|X\cap \C(a)|\leq\beta$ is equivalent to requiring $\rho^a_{\beta+1}\notin X$ when $\beta<k_a$.

Thus, the set $F$ in the algorithm consists of rotations that must be contained in any feasible closed set, and the set $D$ consists of rotations that must be avoided.
Since any closed set containing $F$ must also contain its downward closure $\cl(F)$, a feasible closed set exists only if $\cl(F)\cap D=\emptyset$.
Conversely, if $\cl(F)\cap D=\emptyset$, then $X=\cl(F)$ is itself a closed set satisfying all the required bounds.
Therefore Algorithm~\ref{alg:interval-feasibility} is correct.

We finally evaluate the time complexity.
The algorithm scans the chains for agents in $A$ and computes the downward closure of $F$ in the rotation digraph.
Since the total length of the chains is $O(n^2)$ by Corollary~\ref{cor:agent-chain-length} and the rotation digraph has $O(n^2)$ arcs by Theorem~\ref{thm:rotation-digraph-construction}, these operations can be done in $O(n^2)$ time.
The remaining operations, including the construction of $F$ and $D$, also take $O(n^2)$ time.
Hence the algorithm runs in $O(n^2)$ time.
\end{proof}

Recall that we aim to minimize one of the following two objective functions:
\[
\Delta_{\rm diff}(M)=\max_{a\in A} v_a(M(a))-\min_{a\in A} v_a(M(a))
\quad\text{and}\quad
\Delta_{\rm rat}(M)= \frac{\max_{a\in A} v_a(M(a))}{\min_{a\in A} v_a(M(a))}.
\]
A stable matching $M$ that is feasible with respect to an interval $[\Blow,\Bhigh]$ has objective value at most $\Bhigh-\Blow$ in the $\Delta_{\rm diff}$ case and at most $\Bhigh/\Blow$ in the $\Delta_{\rm rat}$ case, and these bounds are tight if this interval is an inclusion-wise minimal feasible interval. We then search for a minimal feasible interval with minimum objective value.

Let
$\mathcal{V}=\set{\hat v_a(\rho^a_i)\mid a\in A,\ 0\leq i\leq k_a}$
be the set of all utility values that can appear for agents in $A$, and let $b_1<b_2<\cdots<b_N$ be the sorted list of distinct values in $\mathcal{V}$.
By Corollary~\ref{cor:agent-chain-length}, we have $k_a\leq n$ for every $a\in A$, and hence $N=O(n^2)$.
The naive approach is to check the feasibility of intervals $[b_i,b_j]$ for all pairs $b_i\leq b_j$ and take one minimizing the objective value.
Since the number of such pairs is $O(N^2)=O(n^4)$, this already gives a polynomial-time algorithm.
However, we can speed this up by using the fact that feasibility is monotone with respect to interval inclusion: if an interval $[b_i,b_j]$ is infeasible, then every interval contained in it is also infeasible.
Thanks to this property, we can employ a sliding-window strategy \cite{cormen2022introduction} to reduce the number of intervals to be checked to $O(n^2)$.

Algorithm~\ref{alg:minimum-gap} uses a function $\gap$.
In the $\Delta_{\rm diff}$ version, it is defined as $\gap(b_i,b_j)=b_j-b_i$.
In the $\Delta_{\rm rat}$ version, assuming all values are positive, it is defined as $\gap(b_i,b_j)=b_j/b_i$.

\begin{algorithm}[t]
\caption{Stable Matching with Minimum Utility Gap (Main Algorithm)}
\label{alg:minimum-gap}
\SetKwInOut{Input}{Input}
\SetKwInOut{Output}{Output}
\Input{An instance $I$, an agent subset $A$, value functions $\{v_a\}_{a\in A}$, and the choice of objective function, either difference or ratio.}
\Output{A stable matching minimizing the selected utility-gap objective.}
\BlankLine
Construct the student-optimal stable matching $M_0$ in $I$, the set $\R(I)$ of rotations, and a rotation digraph $G=(\R(I),A_G)$\;
Construct the chain $\C(a)$ for each $a\in A$ and compute $\hat v_a(\rho^a_i)$ for all $i\in\{0,1,\dots,k_a\}$\;
Let $b_1<b_2<\cdots<b_N$ be the sorted list of distinct values in $\set{\hat v_a(\rho^a_i)\mid a\in A,\ 0\leq i\leq k_a}$\;
\smallskip
$\best\gets +\infty$, $X^\star\gets\emptyset$, $i\gets 1$, and $j\gets 1$\tcp*{{\small $\Delta$:\,upper bound on the optimal value}}
\While{$1\leq i\leq j\leq N$}{
    Execute Algorithm~\ref{alg:interval-feasibility} for the interval $[b_i,b_j]$\;
    \If{Algorithm~\ref{alg:interval-feasibility} returns {\sc No}}{
        $j\gets j+1$\;
    }
    \Else{
        \If(\tcp*[f]{{\small $\gap(b_i,b_j)=b_j-b_i$ for $\Delta_{\rm diff}\!$ and $\frac{b_j}{b_i}$ for $\Delta_{\rm rat}\!$}})
        {$\gap(b_i,b_j)<\best$}{
        \smallskip
            Let $X$ be the closed set returned by Algorithm~\ref{alg:interval-feasibility}\;
            $\best\gets \gap(b_i,b_j)$ and $X^\star\gets X$\;
        }
        $i\gets i+1$\;
    }
}
\Return the stable matching $M_{X^\star}$ corresponding to $X^\star$\;
\end{algorithm}

\begin{theorem}
\label{thm:main}
Each of the $\Delta_{\rm diff}$ and $\Delta_{\rm rat}$ versions of the Stable Matching Problem with Minimum Utility Gap can be solved in $O(n^4+n^2T_v)$ time, where $n=\max\{|S|,|L|\}$ and $T_v$ is the time required to evaluate one value $v_a(X)$. In particular, when $T_v=O(n^2)$, the algorithm runs in $O(n^4)$.
\end{theorem}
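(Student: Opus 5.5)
We prove correctness of Algorithm~\ref{alg:minimum-gap} and then bound its running time; the difference and ratio versions are handled together.

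\emph{Reduction to candidate intervals.} Let $M^\ast$ be an optimal stable matching, with closed set $X^\ast$. Set $b_p=\min_{a\in A}v_a(M^\ast(a))$ and $b_q=\max_{a\in A}v_a(M^\ast(a))$. Both values lie in $\mathcal V$, since $v_a(M_X(a))=\hat v_a(\rho^a_{|X\cap\C(a)|})$. The interval $[b_p,b_q]$ is feasible, and its gap equals the optimum. Conversely, whenever Algorithm~\ref{alg:interval-feasibility} accepts $[b_i,b_j]$, it returns a closed set whose stable matching has all utilities in $[b_i,b_j]$ (Lemma~\ref{lem:interval-feasibility}). Its objective value is therefore at most $\gap(b_i,b_j)$, using the fact that $\gap$ is non-decreasing in $b_j$ and non-increasing in $b_i$; for the ratio version this uses positivity. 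Hence it suffices to show that the sliding window tests some feasible interval $[b_i,b_j]\subseteq[b_p,b_q]$: such an interval has gap at most the optimum, so $\best$ ends at the optimum and $M_{X^\star}$ is optimal.

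\emph{Sliding-window argument (the main step).} We use the monotonicity of feasibility under inclusion, which holds because the constraints only tighten. Consider the moment $i$ first reaches $p$. If it never does, then at some earlier point $j$ must pass $q$ while $i<p$. But once $j=q$ with $i\le p$, the interval $[b_i,b_q]\supseteq[b_p,b_q]$ is feasible, so $i$ is incremented while $j$ stays at $q$. This repeats until either $i=p$ or a feasible $[b_i,b_q]$ is recorded; the latter has gap at most $\gap(b_p,b_q)$ only when $i\le p$, which holds. So we may assume $i$ reaches $p$ with some current $j$.
If $j\le q$ at that moment, the window either finds $[b_p,b_{j'}]$ feasible for some $j'\le q$, since $j$ increments until feasibility and $[b_p,b_q]$ is feasible, and that interval's gap is at most the optimum. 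If instead $j>q$, consider the step at which $j$ was last incremented past $q$: at that time $j=q$ and $[b_{i'},b_q]$ was infeasible for some $i'\le p$, contradicting monotonicity. In the boundary case where $i$ exceeds $j$, the loop terminates; this only happens after a feasible test, which is harmless. The care needed here is to handle the loop invariant $i\le j$ and termination cleanly, so I would state the invariant as: ``every interval $[b_{i'},b_{j'}]$ with $i'<i$ has either been certified or is dominated by a certified one, and all $[b_i,b_{j'}]$ with $j'<j$ are infeasible.''

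\emph{Running time.} Each iteration increments $i$ or $j$, so there are at most $2N=O(n^2)$ iterations, each costing $O(n^2)$ by Lemma~\ref{lem:interval-feasibility}; this gives $O(n^4)$. Preprocessing consists of the following steps. Computing $M_0$ by deferred acceptance takes $O(n^2)$, and building $\R(I)$ and $G$ takes $O(n^2)$ by Theorem~\ref{thm:rotation-digraph-construction}. Each chain $\C(a)$ is built by topologically sorting $G$ and listing the rotations involving $a$, in total size $O(n^2)$. The values $\hat v_a(\rho^a_i)$ are computed by walking along each chain and updating $M(a)$ via $\rho^\pm$; this is $O(n^2)$ evaluations, hence $O(n^2T_v)$ plus the set maintenance. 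Sorting $\mathcal V$ costs $O(n^2\log n)$. The total is $O(n^4+n^2T_v)$.
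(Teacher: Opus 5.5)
Your proof is correct and follows essentially the paper's route: the sliding window is justified by monotonicity of feasibility under interval inclusion, and the cost is the same $O(N)$ calls to the $O(n^2)$ subroutine plus $O(n^2)$ oracle calls. The difference is presentational: you track the optimal interval $[b_p,b_q]$ and bound the final $\best$ from above, where the paper uses the monotone minimal upper endpoints $J(i)$ and then argues that the endpoints of a min-gap interval are attained.

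One slip should be removed. For $i\le p$, a recorded interval $[b_i,b_q]$ has gap at least $\gap(b_p,b_q)$, not at most. That clause is not needed anyway: while $j=q$ and $i\le p$ every test succeeds, so $i$ simply advances to $p$. Termination via $i>j$ only happens right after a feasible singleton test, whose gap is trivially optimal.
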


\begin{proof}
We can construct the student-optimal stable matching, the set of rotations, and a rotation digraph in $O(n^2)$ time by Theorem~\ref{thm:rotation-digraph-construction}.
By Corollary~\ref{cor:agent-chain-length}, each chain $\C(a)$ has length at most $n$, and hence the total number of values $\hat v_a(\rho^a_i)$ over all $a\in A$ and $0\leq i\leq k_a$ is $O(n^2)$.
These values can be computed using $O(n^2)$ value-oracle calls.
Therefore, the set $\mathcal V$ has size $N=O(n^2)$, and sorting it takes $O(n^2\log n)$ time.

It remains to justify the sliding-window search.
For each $i\in\{1,\dots, N\}$, let $J(i)$ be the minimum index $j\geq i$ such that the interval $[b_i,b_j]$ is feasible. If no such index exists, let $J(i)$ be $N+1$ for convenience.
Then, $J(i)$ is monotone non-decreasing in $i$. Indeed, if $i<i'$ and $[b_{i'},b_j]$ is feasible for some $j<J(i)$, then $[b_i,b_j]$ is also feasible by monotonicity with respect to interval inclusion, contradicting the minimality of $J(i)$.

Algorithm~\ref{alg:minimum-gap} maintains the upper pointer $j$ and moves it only forward.
For a fixed lower endpoint $b_i$, if the current interval $[b_i,b_j]$ is infeasible, the algorithm increases $j$ until it either reaches the minimum feasible upper endpoint $J(i)$ or exceeds $N$.
If it reaches $J(i)$, the algorithm records the value of this interval and then increases $i$.
By the monotonicity of $J(i)$, there is no need to move $j$ backward for later lower endpoints.
If the algorithm terminates because $i>j$, then this happens immediately after a feasible singleton interval $[b_j,b_j]$ has been found, which is clearly optimal.
If the algorithm terminates because $j>N$, then the current and all later lower endpoints admit no feasible interval, so no unchecked interval can improve the solution.

Thus, for every relevant lower endpoint $b_i$, the algorithm finds the feasible interval with the smallest possible upper endpoint.
For any feasible interval $[b_i,b_j]$, replacing $b_j$ by the smallest feasible upper endpoint cannot increase the value of $\gap$ in either the difference or the ratio version.
Therefore, a feasible interval $[b_i,b_j]$ with the minimum value of $\gap(b_i, b_j)$ must be found by the algorithm. For such an interval and the returned closed set $X$, $\gap(b_i, b_j)$ is exactly the optimal value; the feasibility implies $b_i\leq v_a(M_X(a))\leq b_j$ for every $a\in A$, and the minimality of $\gap(b_i, b_j)$ implies the existence of agents in $A$ attaining values $b_i$ and $b_j$.

Over the whole execution of Algorithm~\ref{alg:minimum-gap}, the pointer $i$ moves at most $N$ times and the pointer $j$ also moves at most $N$ times.
Therefore Algorithm~\ref{alg:interval-feasibility} is called $O(N)=O(n^2)$ times.
Since each feasibility check takes $O(n^2)$ time by Lemma~\ref{lem:interval-feasibility}, the total running time of the sliding-window search is $O(n^4)$.
Therefore the overall running time is $O(n^4+n^2T_v)$.
\end{proof}

When the value functions $v_a$ are total utility or average value functions defined by weights (see Section~\ref{sec:problem-setting}), we have $T_v=O(n)$, and hence the total running time is $O(n^4)$.

\section{Comparison with Existing Frameworks}
\label{sec:comparison}

In this section, we position our result relative to existing frameworks.

We first show that our objectives cannot be captured by the minimum-cut representability framework of Faenza--Foussoul--He \cite{FaenzaFoussoulHe2025}, and then explain that, when $A\subseteq S$ or $A\subseteq L$, our problem can be viewed as submodular function minimization over a ring family. We also note that this submodular interpretation does not extend to the general case where $A$ contains agents from both sides.

\subsection{Nonrepresentability by Minimum Cuts}
\label{subsec:mincut-nonrep}

Minimum-cut representability is a recent general framework that captures various optimization problems over stable matchings. Roughly speaking, a stable matching optimization problem is {\em minimum-cut representable} if it can be transformed into a minimum $s$-$t$ cut problem on a digraph whose vertices, other than $s$ and $t$, correspond to rotations. Faenza--Foussoul--He \cite{FaenzaFoussoulHe2025} gave a necessary and sufficient condition for such a representation.

We recall only the notation needed to explain that our problems are not included in the framework. In \cite{FaenzaFoussoulHe2025}, the feasibility set can be any sublattice $\mathcal F$ of the stable matching lattice; to explain our counterexample, it suffices to consider the special case $\mathcal F=\S(I)$. Following \cite{FaenzaFoussoulHe2025}, we consider the many-to-one setting, i.e., $q_s=1$ for every student $s\in S$.

For a stable matching $M\in\S(I)$, let $\Theta_M\subseteq\R(I)$ be the closed set of rotations corresponding to $M$. For a rotation $\theta\in\R(I)$, let $M^\theta$ and $M_\theta$ be the stable matchings corresponding to the downward closure $\operatorname{cl}(\theta)$ and  $\operatorname{cl}(\theta)\setminus\{\theta\}$, respectively. The first-order differential is
\[\partial f_\theta=f(M^\theta)-f(M_\theta),\]
and, for two distinct rotations $\theta,\theta'\in\R(I)$, the second-order differential is
\[
\partial^2 f_{\theta,\theta'}=
f(M^\theta\wedge M_{\theta'})
+f(M_\theta\wedge M^{\theta'})
-f(M_\theta\wedge M_{\theta'})
-f(M^\theta\wedge M^{\theta'}).
\]
The characterization theorem in \cite{FaenzaFoussoulHe2025}, specialized to $\mathcal F=\S(I)$, states that the minimization of $f:\S(I)\to\mathbb R$ is minimum-cut representable if and only if the following conditions hold:
\medskip
\begin{description}
    \item[\sf (a)] $\partial^2 f_{\theta,\theta'}\geq 0$ for every pair of distinct rotations $\theta,\theta'\in\R(I)$.
    \item[\sf (b)] For every $M\in\S(I)$, the objective value coincides with its second-order expansion
    \[
    f(M)=
    f(M_0)+
    \sum_{\theta\in\Theta_M}\partial f_\theta
    -
    \frac{1}{2}
    \sum_{\theta,\theta'\in\Theta_M:\,\theta\neq\theta'}
    \partial^2 f_{\theta,\theta'}.
    \]
\end{description}
We denote the right-hand side of this expansion by $f^{\sf apx}(M)$.

As shown in Example~\ref{ex:minimumcut} in Appendix~\ref{app:omitted}, there is an instance demonstrating that neither $\Delta_{\rm diff}$ nor $\Delta_{\rm rat}$ is minimum-cut representable, even in the one-to-one matching case with $A=L$.
This shows that our polynomial-time algorithm is not a direct consequence of the minimum-cut representability framework.

\subsection{Relation to Submodular Function Minimization}
\label{subsec:submodular-comparison}

As we have seen, our problems are not minimum-cut representable. Since the minimum-cut problem is a special case of submodular function minimization \cite{IwataFleischerFujishige2001,Schrijver2000}, it is natural to ask whether our objectives can be explained through the broader concept of submodularity. We show that, when $A\subseteq S$ or $A\subseteq L$, our problem can indeed be viewed as submodular function minimization, whereas this is not the case for a general mixed-side set $A$.

We prove the claim for the case $A\subseteq L$; the case $A\subseteq S$ is analogous. Let $\mathcal Q$ be the family of all closed subsets of the rotation poset $(\R(I),\unlhd)$. Then $\mathcal Q$ is a {\em ring family}, i.e., $X,Y\in \mathcal Q$ implies $X\cup Y,X\cap Y\in \mathcal Q$. By Theorem~\ref{thm:isomorphism}, for any $X\in \mathcal Q$, eliminating all rotations in $X$ from $M_0$ yields a stable matching, which we denote by $M_X$. Define two functions $f_{\rm diff}, f_{\rm rat}:\mathcal Q\to \mathbb{R}$ by $f_{\rm diff}(X)\coloneqq \Delta_{\rm diff}(M_X)$ and $f_{\rm rat}(X)\coloneqq \Delta_{\rm rat}(M_X)$. We show that both functions are submodular, i.e., they satisfy $f(X)+f(Y)\geq f(X\cup Y)+f(X\cap Y)$ for any $X,Y\in \mathcal Q$.

For each $a\in A$, define $v^*_a:\mathcal Q\to \mathbb{R}$ by
$v^*_a(X)=v_a(M_X(a))$ for every $X\in \mathcal Q$.
Recall the definition of $\hat{v}_a$ just before
Lemma~\ref{lem:value-monotonicity}; we extend $\hat v_a$ to a dummy
element $\rho^a_0$ by setting $\hat v_a(\rho^a_0)=v_a(M_0(a))$, where
$\rho^a_0\lhd \rho$ for every $\rho\in\mathcal C(a)$.
Then
\[
v^*_a(X)=v_a(M_X(a))=\hat{v}_a(\rho),
\quad\text{where $\rho=\max_{\unlhd}\bigl((X\cap\mathcal C(a))\cup\{\rho^a_0\}\bigr)$}.
\]
Thus $v^*_a(X)$ depends only on the largest element in the prefix
$X\cap\mathcal C(a)$ of the chain $\mathcal C(a)$, with the dummy element
$\rho^a_0$ used when this prefix is empty.
Since $\hat v_a$ is monotone non-decreasing on $\C(a)$ for $a\in A\subseteq L$ by Lemma~\ref{lem:value-monotonicity}, for any $X,Y\in \mathcal Q$, we have
\begin{align*}
v^*_a(X\cup Y)=\max\{ v^*_a(X), v^*_a(Y)\}\text{~~~and~~~}v^*_a(X\cap Y)=\min\{ v^*_a(X), v^*_a(Y)\}.
\end{align*}
Let $U_X=\max_{a\in A}v^*_a(X)$ and $u_X=\min_{a\in A}v^*_a(X)$, and define $U_Y$ and $u_Y$ similarly. Then
\begin{align}
\max_{a\in A}v^*_a(X\cup Y)
&= \max_{a\in A}\max\{ v^*_a(X), v^*_a(Y)\}
= \max\{U_X,U_Y\},\label{eq:union-max}\\
\min_{a\in A}v^*_a(X\cup Y)
&= \min_{a\in A}\max\{ v^*_a(X), v^*_a(Y)\}
\geq \max\{u_X,u_Y\},\label{eq:union-min}\\
\max_{a\in A}v^*_a(X\cap Y)
&= \max_{a\in A}\min\{ v^*_a(X), v^*_a(Y)\}
\leq \min\{U_X,U_Y\},\label{eq:inter-max}\\
\min_{a\in A}v^*_a(X\cap Y)
&= \min_{a\in A}\min\{ v^*_a(X), v^*_a(Y)\}
= \min\{u_X,u_Y\}.\label{eq:inter-min}
\end{align}
Here, \eqref{eq:union-min} follows from $\max\{ v^*_a(X), v^*_a(Y)\}\geq \max\{u_X,u_Y\}$ for every $a\in A$, and \eqref{eq:inter-max} follows from $\min\{ v^*_a(X), v^*_a(Y)\}\leq \min\{U_X,U_Y\}$ for every $a\in A$.

For the difference version, using \eqref{eq:union-max}--\eqref{eq:inter-min}, we obtain
\begin{align*}
f_{\rm diff}(X)+f_{\rm diff}(Y)
&= (U_X-u_X)+(U_Y-u_Y)
\\
&= \max\{U_X,U_Y\}+\min\{U_X,U_Y\}-\max\{u_X,u_Y\}-\min\{u_X,u_Y\}\\
&\geq \max_{a\in A}v^*_a(X\cup Y)+\max_{a\in A}v^*_a(X\cap Y)-\min_{a\in A}v^*_a(X\cup Y)-\min_{a\in A}v^*_a(X\cap Y)\\
&=f_{\rm diff}(X\cup Y)+f_{\rm diff}(X\cap Y).
\end{align*}

For the ratio version, assume that all values are positive. We use a general inequality $\frac{x}{z}+\frac{y}{w}\geq \frac{\max\{x,y\}}{\max\{z,w\}}+\frac{\min\{x,y\}}{\min\{z,w\}}$, which holds for any positive real numbers $x,y,z,w$.\footnote{A brief proof. If $(x-y)(z-w)\geq 0$, then the two sides are equal. Otherwise, the right-hand side is $\frac{x}{w}+\frac{y}{z}$, and the difference between the left-hand side and the right-hand side is $\frac{x}{z}+\frac{y}{w}-\frac{x}{w}-\frac{y}{z}=\frac{(x-y)(w-z)}{zw}\geq 0$.}
Using this inequality and \eqref{eq:union-max}--\eqref{eq:inter-min}, we obtain
\begin{align*}
f_{\rm rat}(X)+f_{\rm rat}(Y)
&= \frac{U_X}{u_X}+\frac{U_Y}{u_Y}\\
&\geq \frac{\max\{U_X,U_Y\}}{\max\{u_X,u_Y\}}+\frac{\min\{U_X,U_Y\}}{\min\{u_X,u_Y\}}\\
&\geq \frac{\max_{a\in A}v^*_a(X\cup Y)}{\min_{a\in A}v^*_a(X\cup Y)}+\frac{\max_{a\in A}v^*_a(X\cap Y)}{\min_{a\in A}v^*_a(X\cap Y)}\\
&=f_{\rm rat}(X\cup Y)+f_{\rm rat}(X\cap Y).
\end{align*}

Thus both $f_{\rm diff}$ and $f_{\rm rat}$ are submodular functions on $\mathcal Q$. Hence, when $A$ lies on one side, a minimizer $X^*$ of the corresponding submodular function yields a stable matching $M_{X^*}$ minimizing the objective value. This view does not improve the running time of our algorithm, but it clarifies the position of our framework.

The proof above relies on the identities $v^*_a(X\cup Y)=\max\{ v^*_a(X), v^*_a(Y)\}$ and $v^*_a(X\cap Y)=\min\{ v^*_a(X), v^*_a(Y)\}$, which hold for every $a\in A\subseteq L$. For a student $a\in S$, the same identities hold with $\max$ and $\min$ interchanged. Therefore, when $A$ contains agents from both $S$ and $L$, the above argument breaks down. Indeed, there are instances in which the resulting objective function is not submodular. See Example~\ref{ex:nonsubmodular} in Appendix~\ref{app:omitted}.

\section{Conclusion}
We introduced the Stable Matching Problem with Minimum Utility Gap and gave an efficient algorithm for solving it. The algorithm exploits the chain structure of rotations involving a fixed agent in the rotation poset. Since distributive lattice structures and rotation-like representations appear in more general matching models, an interesting future direction is to investigate how far our tractability result extends beyond the present setting.

\section*{Acknowledgements}
We thank Yuri Faenza and Satoru Iwata for their valuable comments.
The second author was supported by JST ERATO Grant Number JPMJER2301, JST CRONOS Grant Number JPMJCS24K2, and JSPS KAKENHI Grant Number JP26K14706.

\bibliography{ISAAC_reference}
\appendix

\section{Omitted Proofs and Examples}
\label{app:omitted}

In this section, we provide a proof and examples omitted from the main body due to space constraints.

The following lemma is given in Section~\ref{sec:algorithm}.
\uniquenesslem*

\begin{proof}
Statement (i) is exactly Lemma~6 of Bansal--Agrawal--Malhotra \cite{BansalAgrawalMalhotra2007}.

We prove (iii). Suppose, to the contrary, that there exist a rotation $\rho$ and a pair $(s,\ell)$ such that $(s,\ell)\in M_z\cap\rho^-$. Let $M$ be a stable matching in which $\rho$ is exposed. Then $(s,\ell)\notin M/\rho$, while $M/\rho$ contains $(s,\rho^+(s))$. By Lemma~\ref{lem:one-step-monotonicity}, $\ell=\rho^-(s)\succ_s\rho^+(s)$. By Theorem~\ref{thm:comparability}, $M_z(s)$ and $(M/\rho)(s)$ are comparable. However, $(M/\rho)(s)\succeq_s M_z(s)$ is impossible, because $\rho^+(s)\in (M/\rho)(s)$ and $\ell\in M_z(s)\setminus (M/\rho)(s)$, but $\rho^+(s)\not\succ_s\ell$. Hence $M_z(s)\succ_s (M/\rho)(s)$. This contradicts the fact that $M_z$ is student-pessimal. Thus, $M_z\cap\rho^-=\emptyset$.

Similarly, suppose that there exist a rotation $\rho$ and a pair $(s,\ell)$ such that $(s,\ell)\in M_0\cap\rho^+$. Let $M$ be a stable matching in which $\rho$ is exposed. Then $(s,\ell)\notin M$, and $M(s)\succ_s\rho^+(s)=\ell$ by Lemma~\ref{lem:one-step-monotonicity}. Since $|M_0(s)|=|M(s)|$ by Theorem~\ref{thm:RHT}, there exists $h\in M(s)\setminus M_0(s)$, and it satisfies $h\succ_s\ell$. By Theorem~\ref{thm:comparability}, $M_0(s)$ and $M(s)$ are comparable. However, $M_0(s)\succeq_s M(s)$ is impossible, because $\ell\in M_0(s)$ and $h\in M(s)\setminus M_0(s)$, but $\ell\not\succ_s h$. Hence $M(s)\succ_s M_0(s)$. This contradicts the student-optimality of $M_0$. Thus, $M_0\cap\rho^+=\emptyset$.

It remains to prove (ii). By Theorem~\ref{thm:isomorphism}, $M_z$ is obtained from $M_0$ by eliminating all rotations exactly once. For an acceptable student-laboratory pair $e\in E$, let $r^-(e)$ be the number of rotations $\rho$ with $e\in\rho^-$, and let $r^+(e)$ be the number of rotations $\rho$ with $e\in\rho^+$. Consider a full elimination sequence from $M_0$ to $M_z$. Since each rotation deletes only existing elements and adds only nonexistent elements, for any $e\in E$, the value ${\bf 1}_{M_0}(e)-r^-(e)+r^+(e)$ is $1$ if $e\in M_z$ and $0$ otherwise, where ${\bf 1}_{M_0}(e)$ is the indicator of $e\in M_0$. By (i), $r^-(e)\leq 1$. By the first part of (iii), if $e\in M_z$, then $r^-(e)=0$; by the second part of (iii), if $e\in M_0$, then $r^+(e)=0$. Therefore, the required condition $r^+(e)\leq 1$ is immediate if $e\in M_0$. If $e\notin M_0$ and $e\in M_z$, then ${\bf 1}_{M_0}(e)-r^-(e)+r^+(e)=1$ and $r^-(e)=0$, and hence $r^+(e)=1$. If $e\notin M_0\cup M_z$, then ${\bf 1}_{M_0}(e)-r^-(e)+r^+(e)=0$ gives $r^+(e)=r^-(e)\leq 1$. Hence $r^+(e)\leq 1$ in all cases.
\end{proof}

The following examples are deferred from Section~\ref{sec:comparison}.

\begin{example}[Nonrepresentability by Minimum Cuts]\label{ex:minimumcut}
We show that both of our objective functions $\Delta_{\rm diff}$ and $\Delta_{\rm rat}$ violate condition~(b) of the characterization of minimum-cut representability in \cite{FaenzaFoussoulHe2025}, reviewed in Section~\ref{sec:comparison}. Recall that condition~(b) requires an objective function $f$ to satisfy
\[
f(M)=f^{\sf apx}(M)\coloneqq
f(M_0)+
\sum_{\theta\in\Theta_M}\partial f_\theta
-
\frac{1}{2}
\sum_{\theta,\theta'\in\Theta_M:\,\theta\neq\theta'}
\partial^2 f_{\theta,\theta'}
\]
for any $M\in \S(I)$ where
\begin{align*}
&\partial f_\theta=f(M^\theta)-f(M_\theta),\\
&\partial^2 f_{\theta,\theta'}=
f(M^\theta\wedge M_{\theta'})
+f(M_\theta\wedge M^{\theta'})
-f(M_\theta\wedge M_{\theta'})
-f(M^\theta\wedge M^{\theta'}).
\end{align*}

Consider the disjoint union of four identical $2\times 2$ one-to-one matching gadgets. For each $i\in\{1,2,3,4\}$, the $i$th gadget consists of students $s_i,s'_i$ and laboratories $\ell_i,\ell'_i$. The acceptable pairs are only those within the same gadget, and the preference lists are
\[
s_i:\ell_i\succ\ell'_i,\qquad
s'_i:\ell'_i\succ\ell_i,\qquad
\ell_i:s'_i\succ s_i,\qquad
\ell'_i:s_i\succ s'_i.
\]
The student-optimal stable matching $M_0$ assigns $s_i$ to $\ell_i$ and $s'_i$ to $\ell'_i$ for every $i$. Each gadget has exactly one rotation, denoted by $\theta_i$, and eliminating $\theta_i$ replaces $(s_i,\ell_i),(s'_i,\ell'_i)$ with $(s_i,\ell'_i),(s'_i,\ell_i)$. The rotations $\theta_1,\theta_2,\theta_3,\theta_4$ are mutually incomparable in $(\R(I),\unlhd)$.

Let $A=L$ and define the values of the laboratories by
\[
v_{\ell_i}(\{s_i\})=10,\quad
v_{\ell_i}(\{s'_i\})=20,\quad
v_{\ell'_i}(\{s'_i\})=10,\quad
v_{\ell'_i}(\{s_i\})=20
\]
for each $i\in\{1,2,3,4\}$.
For $T\subseteq\{1,2,3,4\}$, let $M_T$ be the stable matching obtained from $M_0$ by eliminating the rotations $\set{\theta_i \mid i\in T}$. Then
\[
\Delta_{\rm diff}(M_T)=
\begin{cases}
0 & \text{if } |T|=0 \text{ or } |T|=4,\\
10 & \text{if } 1\leq |T|\leq 3,
\end{cases}
\quad
\Delta_{\rm rat}(M_T)=
\begin{cases}
1 & \text{if } |T|=0 \text{ or } |T|=4,\\
2 & \text{if } 1\leq |T|\leq 3.
\end{cases}
\]

We first consider the difference version and let $f(M)=\Delta_{\rm diff}(M)$.
Since all rotations are incomparable, $M_{\theta_i}=M_0$ and $M^{\theta_i}=M_{\{i\}}$ for each $i$. Hence $\partial f_{\theta_i}=f(M_{\{i\}})-f(M_0)=10$. For distinct $i,j$, the meet corresponds to taking the union of eliminated rotations, and therefore
$\partial^2 f_{\theta_i,\theta_j}
=
f(M_{\{i\}})+f(M_{\{j\}})-f(M_0)-f(M_{\{i,j\}})
=
10$.
Now consider $M=M_{\{1,2,3,4\}}$. Then all laboratories have utility $20$, so $f(M)=0$. On the other hand, $\Theta_M=\{\theta_1,\theta_2,\theta_3,\theta_4\}$, and the second-order expansion gives
\[
f^{\sf apx}(M)
=
0+4\cdot 10-\frac{1}{2}\cdot 12\cdot 10
=
-20.
\]
Thus $f(M)\neq f^{\sf apx}(M)$, and condition (b) fails. Consequently, $\Delta_{\rm diff}$ is not minimum-cut representable in general.

Next, let $f(M)=\Delta_{\rm rat}(M)$. Then $f(M_0)=1$, $f(M_{\{i\}})=2$, $f(M_{\{i,j\}})=2$, and $f(M_{\{1,2,3,4\}})=1$. Hence $\partial f_{\theta_i}=1$ and $\partial^2 f_{\theta_i,\theta_j}=1$ for all distinct $i,j$, while
\[
f^{\sf apx}(M_{\{1,2,3,4\}})
=
1+4\cdot 1-\frac{1}{2}\cdot 12\cdot 1
=
-1
\neq
1
=
f(M_{\{1,2,3,4\}}).
\]
Thus condition (b) fails for $\Delta_{\rm rat}$ as well.
\end{example}

\begin{example}[Non-submodularity when $A$ intersects both $S$ and $L$]
\label{ex:nonsubmodular}
We show that, unlike the case with $A\subseteq S$ or $A\subseteq L$ analyzed in Section~\ref{subsec:submodular-comparison}, when $A\cap S\neq \emptyset$ and $A\cap L\neq \emptyset$, the objective functions $f_{\rm diff}, f_{\rm rat}:\mathcal Q\to \mathbb{R}$ defined by $f_{\rm diff}(X)\coloneqq \Delta_{\rm diff}(M_X)$ and $f_{\rm rat}(X)\coloneqq \Delta_{\rm rat}(M_X)$ are not necessarily submodular. Here, $\mathcal Q$ is the family of closed sets of the rotation poset, and $M_X$ is the stable matching corresponding to $X\in \mathcal Q$.

We consider an instance similar to Example~\ref{ex:minimumcut}, but here we use only two gadgets. For each $i\in \{1,2\}$, create a $2\times 2$ one-to-one matching gadget consisting of $s_i, s'_i, \ell_i, \ell'_i$ as in Example~\ref{ex:minimumcut}. Then the student-optimal stable matching $M_0$ assigns $s_i$ to $\ell_i$ and $s'_i$ to $\ell'_i$ for each $i\in\{1,2\}$. There are two rotations $\theta_1$ and $\theta_2$, and eliminating $\theta_i$ replaces $(s_i,\ell_i),(s'_i,\ell'_i)$ with $(s_i,\ell'_i),(s'_i,\ell_i)$. These two rotations are mutually incomparable in the rotation poset $(\R(I),\unlhd)$.

Let $A=S\cup L$; thus, $A$ consists of all eight agents.
For singleton sets, we write $v_a(x)$ instead of $v_a(\{x\})$.
The agents' values are defined as follows:
\begin{align}
v_{s_1}(\ell_1)=v_{s'_1}(\ell'_1)&=30, \qquad
v_{\ell_1}(s_1)=v_{\ell'_1}(s'_1)=10, \label{eq:values1}\\
v_{s_2}(\ell_2)=v_{s'_2}(\ell'_2)&=20,\qquad
v_{\ell_2}(s_2)=v_{\ell'_2}(s'_2)=20, \label{eq:values2}\\[1mm]
v_{s_1}(\ell'_1)=v_{s'_1}(\ell_1)&=20, \qquad
v_{\ell_1}(s'_1)=v_{\ell'_1}(s_1)=20, \label{eq:values3}\\
v_{s_2}(\ell'_2)=v_{s'_2}(\ell_2)&=10, \qquad
v_{\ell_2}(s'_2)=v_{\ell'_2}(s_2)=30. \label{eq:values4}
\end{align}
These values are consistent with the preferences; each agent assigns a larger value to a better partner. The values in \eqref{eq:values1} and \eqref{eq:values2} are attained in the student-optimal stable matching, while those in \eqref{eq:values3} and \eqref{eq:values4} are attained in the student-pessimal stable matching.

Set $X=\{\theta_1\}$ and $Y=\{\theta_2\}$. Then $X\cap Y=\emptyset$ and $X\cup Y=\{\theta_1,\theta_2\}$ correspond to the student-optimal and student-pessimal stable matchings, respectively. Thus, we have
\[
f_{\rm diff}(X\cup Y)=20,\quad f_{\rm diff}(X\cap Y)=20,\quad
f_{\rm rat}(X\cup Y)=3,\quad f_{\rm rat}(X\cap Y)=3.
\]
The set $X$ corresponds to the matching assigning the values in \eqref{eq:values2} and \eqref{eq:values3}, while the set $Y$ corresponds to the matching assigning the values in \eqref{eq:values1} and \eqref{eq:values4}. Thus,
\[
f_{\rm diff}(X)=0,\quad f_{\rm diff}(Y)=20,\quad
f_{\rm rat}(X)=1,\quad f_{\rm rat}(Y)=3.
\]
From these, we obtain
\begin{align*}
&f_{\rm diff}(X)+f_{\rm diff}(Y)=20<40=f_{\rm diff}(X\cup Y)+f_{\rm diff}(X\cap Y),\\
&f_{\rm rat}(X)+f_{\rm rat}(Y)=4<6=f_{\rm rat}(X\cup Y)+f_{\rm rat}(X\cap Y),
\end{align*}
and hence both $f_{\rm diff}$ and $f_{\rm rat}$ violate the submodular inequality.
\end{example}

\section{NP-hardness and Inapproximability with Ties}
\label{app:hardness}

In this section, we consider the case where ties are allowed in the preference lists of agents. We use the standard weak-stability notion under ties \cite{irving1994stable}: a pair blocks a matching only if both agents can become {\em strictly} better off.
We show that, for both versions of the objective function, finding an optimal stable matching becomes NP-hard. We also show inapproximability results. The results in this section hold even in the many-to-one matching setting, even when $A=L$, ties appear only in laboratories' preference lists, and all pairs are acceptable, i.e., $E=S\times L$.

We first provide the NP-hardness proof for the difference version. In this case, simple average value functions defined by weights are sufficient to show the hardness; the proof can also be modified to work for total utility value functions.
\begin{theorem}
\label{thm:hardness-ties}
The difference version of the Stable Matching Problem with Minimum Utility Gap in the presence of ties is NP-hard.
\end{theorem}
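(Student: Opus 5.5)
We plan a polynomial-time reduction from a known NP-hard problem concerning weakly stable matchings with ties. The natural source is the hardness of finding a weakly stable matching that matches all agents (or has a prescribed size) when ties are present, due to Manlove et al.\ and Iwama et al. However, the section's claims that the hardness holds when $E=S\times L$ and ties appear only on the laboratories' side suggest using a complete-list variant. We will therefore reduce instead from a problem such as \textsc{Exact Cover by 3-Sets} or a SAT variant. The reduction will be shaped so that some laboratory's assignment quality is forced, and a small gap is achievable if and only if the source instance is a yes-instance.

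The key mechanism is as follows. In the many-to-one setting, let each laboratory $\ell\in L=A$ have an average-value function $v_\ell(X)=\frac1{|X|}\sum_{x\in X}w_\ell(x)$. Give weights $w_\ell(s)\in\{0,1\}$ that are constant on each tie class, so that ($\star$) holds. Each laboratory's utility is then the fraction of its assignees that are ``good'' for it.

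\begin{itemize}
\item \emph{Complete lists.} Complete lists make every laboratory full in any weakly stable matching with enough students. Ties in laboratory lists give the freedom to choose among equally ranked students.
\item \emph{Variable gadgets.} Each gadget has two weakly stable configurations (true/false), governed by student choices.
\item \emph{Clause/element laboratories.} Each such laboratory can attain utility $1$ only if at least one (or exactly one) of its neighbouring gadgets is set appropriately.
\item \emph{Auxiliary laboratories.} Add laboratories whose utility is forced to be $1$ in every weakly stable matching, for instance a laboratory ranked first by a dedicated student with a weight-$1$ entry. This pins $\max_{a\in A}v_a=1$.
\end{itemize}

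With the maximum pinned at $1$, $\Delta_{\rm diff}=0$ if and only if every clause laboratory reaches utility $1$, which corresponds to a satisfying assignment or exact cover.

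The key steps, in order, are these. First, fix the source problem (say, \textsc{3-SAT} with bounded occurrences) and describe the gadgets. Second, prove a ``gadget lemma'': under weak stability, each variable gadget's students are matched in one of exactly two patterns, while the remaining complete-list entries are harmless. This uses the fact that students have strict lists and only strict improvements block. Third, prove both directions: a satisfying assignment yields a weakly stable matching with gap $0$, and a gap-$0$ matching yields a satisfying assignment. Fourth, observe that the gap is at least $1/k$ for some polynomial $k$ otherwise. This also gives the inapproximability alluded to, since any multiplicative approximation must distinguish $0$ from positive.

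The main obstacle is the second step. With complete lists, we must make sure that no unintended weakly stable matching lets a clause laboratory reach full utility through a student from an unrelated gadget. Such a student would have to be weight-$1$ for that laboratory while being placed there stably. We will handle this by appending these cross edges at the bottom of students' lists and by placing them in a strictly lower tie class (with weight $0$) on laboratories' lists. Any such stray assignment then lowers utility instead of raising it. Checking that these padding edges never block the intended matchings is where we expect the most care to be needed.
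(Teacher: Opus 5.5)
Your proposal is a template rather than a reduction. The gap is exactly the part you defer. The source problem is never fixed, and the variable gadgets and clause laboratories are never defined. So the ``gadget lemma'' and both directions of the equivalence cannot be checked, and these deferred points are where all the work lies.

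With $A=L$ and $0/1$ average values, $\Delta_{\rm diff}=0$ (given the pinned maximum) requires \emph{every} laboratory to hold only weight-$1$ students. Your claim ``$\Delta_{\rm diff}=0$ iff every clause laboratory reaches utility $1$'' therefore holds only if you shrink $A$ to the clause laboratories plus the pinned one. Otherwise you must prove that all gadget and auxiliary laboratories also reach utility $1$ in the intended matchings.

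Under ``at least one'' semantics, a clause with several true literals has surplus good students. These must be absorbed elsewhere with weight $1$ without creating a strictly improving blocking pair. For the converse, you must show that weak stability prevents a clause laboratory from being filled by a weight-$1$ student coming from a false literal.

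Your weight-$0$ bottom-of-list padding does not address these issues. It only guarantees that padding pairs neither block the intended matching nor contribute utility. The real difficulty is designing the weight-$1$ edges so that weak stability forces exactly two patterns per variable and links them correctly to the clauses.

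The idea you are missing, which makes gadgets unnecessary, is that each $v_a$ is an arbitrary numeric function subject only to ($\star$). A number problem can therefore be encoded directly in a single laboratory's utility. The paper reduces from \textsc{Partition} as follows:
\begin{itemize}
\item For each $i$, a capacity-$1$ laboratory $\ell_i$ has $s_i\sim t_i$ tied at the top with weight $B+T/2$, and all other students strictly lower.
\item Both $s_i$ and $t_i$ rank $\ell_i\succ\ell_{\rm sum}\succ\cdots$.
\item $\ell_{\rm sum}$ has capacity $m$, with weights $B+ma_i$ on $s_i$ and $B$ on $t_i$.
\end{itemize}
Weak stability forces each $\ell_i$ to take exactly one of $s_i,t_i$. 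If $\ell_i$ were vacant or held any other student, $s_i$ or $t_i$ would block with it, while the tie stops the loser from blocking. The $m$ losers must all go to $\ell_{\rm sum}$, since they rank it second and it has capacity exactly $m$.

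Hence every $\ell_i$ has utility $B+T/2$, and $\ell_{\rm sum}$ has utility $B+\sum_{i\in J}a_i$ with $J=\{i : s_i \text{ is at } \ell_{\rm sum}\}$. So $\Delta_{\rm diff}=|\sum_{i\in J}a_i-T/2|$, which is $0$ iff the \textsc{Partition} instance is a yes-instance, and no gadget lemma is needed.

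Your $0/1$ route, if completed, would actually yield strong NP-hardness, which the paper's reduction from the weakly NP-hard \textsc{Partition} does not. As written, however, it does not prove the theorem.
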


\begin{proof}
We give a polynomial-time reduction from the \textsc{Partition} problem.
An instance of \textsc{Partition} is a set $\{a_1,a_2,\ldots,a_m\}$ of positive integers with total sum $T=\sum_{i=1}^m a_i$, where $T$ is assumed to be even.
The question is whether there exists a subset $J\subseteq\{1,\ldots,m\}$ such that $\sum_{i\in J}a_i=T/2$.

From this instance, we construct a many-to-one matching instance.
For each $i\in\{1,\ldots,m\}$, we create a local laboratory $\ell_i$ with capacity $1$ and two students $s_i$ and $t_i$.
We also create a special global laboratory $\ell_{\mathrm{sum}}$ with capacity $m$.
Thus, we have $S=\{s_1,t_1, \dots, s_m, t_m\}$ and $L=\{\ell_1, \dots, \ell_m\}\cup \{\ell_{\rm sum}\}$. We set $A=L$.

Each student $s_i$ and $t_i$ ranks $\ell_i$ first, $\ell_{\mathrm{sum}}$ second, and all other laboratories afterward in an arbitrary strict order. That is, it is written as
\[s_i, t_i: \ell_i\succ \ell_{\rm sum}\succ \cdots.\]

For each $\ell\in L=A$, we define a weight function $w_\ell:S\to \mathbb{R}$, using which we define the value function $v_\ell$ by $v_\ell(X)=\frac{1}{|X|}\sum_{x\in X} w_\ell(x)$. Let $B$ be a sufficiently large positive constant.
For each local laboratory $\ell_i$, set
\[\textstyle
w_{\ell_i}(s_i)=w_{\ell_i}(t_i)=B+\frac{T}{2},
\]
and assign strictly smaller weights to all other students.
For the global laboratory $\ell_{\mathrm{sum}}$, set
\[
w_{\ell_{\mathrm{sum}}}(s_i)=B+ma_i,\qquad
w_{\ell_{\mathrm{sum}}}(t_i)=B
\]
for each $i\in\{1,\ldots,m\}$.

For each $\ell\in L=A$, let its preference be consistent with $w_\ell$, where students with the same value are tied. Then the preference list of each local laboratory $\ell_i$ has the form
\[\ell_i: (s_i\sim t_i)\succ \cdots.\]
This completes the construction of a matching instance $I$.

In any stable matching in $I$, each local laboratory $\ell_i$ is assigned exactly one of $s_i$~and~$t_i$.
Indeed, if $\ell_i$ were assigned neither of them, then one of these two students would form a blocking pair with $\ell_i$.
Conversely, once $\ell_i$ is assigned one of them, the other does not form a blocking pair with $\ell_i$ because $\ell_i$ is indifferent between the two.
This also implies that the number of students who are not assigned to their first-choice local laboratory is exactly $m$, and hence they are all assigned to their second choice $\ell_{\rm sum}$, whose capacity is $m$.
Thus, for each $i\in \{1,\dots,m\}$, there are two possible stable states:
\begin{align*}
&\text{state 0: } M(s_i)=\ell_i,\ M(t_i)=\ell_{\mathrm{sum}},\\
&\text{state 1: } M(t_i)=\ell_i,\ M(s_i)=\ell_{\mathrm{sum}}.
\end{align*}

Let $J=\set{i | M(s_i)=\ell_{\mathrm{sum}}}$.
In either state, the utility received by each local laboratory $\ell_i$ is
$v_{\ell_i}(M(\ell_i))=B+\frac{T}{2}$.
On the other hand, the utility of the global laboratory is
\[\textstyle
v_{\ell_{\mathrm{sum}}}(M(\ell_{\mathrm{sum}}))
=
\frac{1}{m}
\left(
\sum_{i\in J}(B+ma_i)+\sum_{i\notin J}B
\right)
=
B+\sum_{i\in J}a_i.
\]
Hence the utility gap is
\[\textstyle
\Delta_{\rm diff}(M)
=
\left|
\left(B+\sum_{i\in J}a_i\right)
-
\left(B+\frac{T}{2}\right)
\right|
=
\left|\sum_{i\in J}a_i-\frac{T}{2}\right|.
\]
Therefore, the optimum value is $0$ if and only if the original \textsc{Partition} instance is a yes-instance.
This proves NP-hardness.
\end{proof}

\begin{corollary}
\label{cor:hardness-inapprox}
For any finite $\alpha>1$, there is no polynomial-time $\alpha$-approximation algorithm for the difference version of the Stable Matching Problem with Minimum Utility Gap in the presence of ties, unless $\mathrm{P}=\mathrm{NP}$.
\end{corollary}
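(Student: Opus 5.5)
The inapproximability follows from the reduction in the proof of Theorem~\ref{thm:hardness-ties}. That reduction produces instances whose optimum of $\Delta_{\rm diff}$ is exactly $0$ when the \textsc{Partition} instance is a yes-instance. Otherwise the optimum is at least $1$, since $\left|\sum_{i\in J}a_i-\frac{T}{2}\right|$ is a nonzero integer.

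Suppose that for some finite $\alpha>1$ there is a polynomial-time $\alpha$-approximation algorithm. Given any \textsc{Partition} instance, I would build the instance $I$ (with $A=L$ and the average value functions) from the proof of Theorem~\ref{thm:hardness-ties} in polynomial time and run the algorithm on it. I would then compute $\Delta_{\rm diff}(M)$ for the returned stable matching $M$, which is easy from the explicit weights.

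If the \textsc{Partition} instance is a yes-instance, the optimum is $0$, so the guarantee gives $\Delta_{\rm diff}(M)\leq \alpha\cdot 0=0$. Hence $\Delta_{\rm diff}(M)=0$, and the corresponding $J=\set{i | M(s_i)=\ell_{\rm sum}}$ is a valid partition. If it is a no-instance, every stable matching has $\Delta_{\rm diff}(M)\geq 1>0$. So testing whether $\Delta_{\rm diff}(M)=0$ decides \textsc{Partition} in polynomial time, which implies $\mathrm{P}=\mathrm{NP}$.

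The one point to make explicit is the multiplicative notion of approximation: a solution of value at most $\alpha\cdot\mathrm{OPT}$ for a minimization problem. With this notion, any finite factor must return an exact optimum whenever $\mathrm{OPT}=0$. I would also check that the reduction only ever yields nonnegative objective values and that the case analysis in the proof of Theorem~\ref{thm:hardness-ties} holds in every stable matching. Both are already established there, so nothing beyond this gap argument is needed.
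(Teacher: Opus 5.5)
Your proposal is correct and follows essentially the same gap argument as the paper. On yes-instances the optimum is $0$, which forces any $\alpha$-approximation to return a value-$0$ solution, while on no-instances every stable matching has strictly positive value. Your extra observation that the value is then at least $1$ (since $T/2$ is an integer) is correct but not needed, because strict positivity already suffices.
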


\begin{proof}
Suppose that such an $\alpha$-approximation algorithm exists.
In the reduction above, if the given \textsc{Partition} instance is a yes-instance, then the optimum value is $0$, so an $\alpha$-approximation algorithm must return a solution of value $0$.
If the \textsc{Partition} instance is a no-instance, then every feasible solution has strictly positive value.
Thus, the algorithm would distinguish yes-instances from no-instances of \textsc{Partition} in polynomial time, which is impossible unless $\mathrm{P}=\mathrm{NP}$.
\end{proof}

By slightly modifying the reduction used for the difference version, we can also show NP-hardness and inapproximability for the ratio version of our problem with ties.

Rather than providing a full proof again, we explain how to modify the construction in the reduction.
We specify the value functions by polynomial-time computable formulas, in accordance with the value-oracle model used in this paper.
It is sufficient to change only the value functions $v_\ell$ of laboratories, which were defined as $v_\ell(X)=\frac{1}{|X|}\sum_{x\in X} w_\ell(x)$ in the proof of Theorem~\ref{thm:hardness-ties}, into
\[\textstyle v_\ell(X)=\exp\left(C\cdot \frac{1}{|X|}\sum_{x\in X} w_\ell(x)\right)\]
for an arbitrary positive constant $C$. Then, in the analysis in the last paragraph of the proof of Theorem~\ref{thm:hardness-ties}, the utility of each local laboratory becomes $v_{\ell_i}(M(\ell_i))=\exp(C\cdot (B+\frac{T}{2}))$, while that of the global laboratory becomes
$\textstyle
v_{\ell_{\mathrm{sum}}}(M(\ell_{\mathrm{sum}}))
=\exp(C\cdot (B+\sum_{i\in J}a_i))$.
Hence the utility gap with respect to the ratio objective is
\[
\Delta_{\rm rat}(M)
=\exp(C\cdot \textstyle{\left|\sum_{i\in J}a_i-\frac{T}{2}\right|}).
\]
Therefore, the optimum value is $1$ if and only if the original \textsc{Partition} instance is a yes-instance. This modification implies the following.

\begin{corollary}
The ratio version of the Stable Matching Problem with Minimum Utility Gap in the presence of ties is NP-hard.
\end{corollary}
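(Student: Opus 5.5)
The statement to prove is the final corollary: the ratio version with ties is NP-hard. I would adapt the \textsc{Partition} reduction from the proof of Theorem~\ref{thm:hardness-ties}, keeping the matching instance unchanged and modifying only the laboratories' value functions, so that the ratio objective becomes a strictly monotone function of the difference objective. Concretely, I would take the instance $I$ built there, with local laboratories $\ell_i$, students $s_i,t_i$, the global laboratory $\ell_{\mathrm{sum}}$, weights $w_\ell$, and $A=L$. I would then set
\[
v_\ell(X)=\exp\Bigl(C\cdot\tfrac{1}{|X|}\textstyle\sum_{x\in X}w_\ell(x)\Bigr)
\]
for a fixed constant $C>0$, say $C=1$.

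\textbf{Step 1: validity of the value functions.} Each new $v_\ell$ must satisfy ($\star$), or at least the analogous consistency with the laboratory's weak preference, since the preferences are defined from the weights and ties are allowed. This holds because the average-value function already does, and composing with the strictly increasing map $\exp(C\,\cdot)$ preserves all inequalities. The values are also positive, so $\Delta_{\rm rat}$ is well defined. Finally, each value is computable by a polynomial-size formula, which fits the oracle model.

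\textbf{Step 2: reuse the structural analysis.} The stable matchings do not depend on the $v_\ell$. So, exactly as in the proof of Theorem~\ref{thm:hardness-ties}, each $\ell_i$ receives exactly one of $s_i,t_i$, and $\ell_{\mathrm{sum}}$ receives the remaining $m$ students. Moreover, every $J\subseteq\{1,\ldots,m\}$ is realized by some stable matching. Both facts follow from the tie argument given there.

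\textbf{Step 3: compute the objective.} The local laboratories have value $\exp(C(B+T/2))$, and $\ell_{\mathrm{sum}}$ has value $\exp(C(B+\sum_{i\in J}a_i))$. Therefore
\[
\Delta_{\rm rat}(M)=\exp\Bigl(C\,\bigl|\textstyle\sum_{i\in J}a_i-T/2\bigr|\Bigr).
\]
It follows that the optimum equals $1$ if and only if the \textsc{Partition} instance is a yes-instance, which gives NP-hardness.

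\textbf{Main obstacle: encoding size.} The only delicate point is that $\exp$ of large arguments cannot be written in binary compactly, so the reduction must stay polynomial. I would handle this by noting that the value oracle is given as a formula, as the paper permits. Alternatively, I would observe that the decision question ``is the optimum $1$?'' only requires comparing the exponents, so no explicit large numbers are ever needed.
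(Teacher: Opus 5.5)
Your proposal is correct and follows the paper's argument essentially verbatim. The paper likewise keeps the \textsc{Partition} instance from Theorem~\ref{thm:hardness-ties}, replaces each $v_\ell$ by $\exp\bigl(C\cdot\frac{1}{|X|}\sum_{x\in X}w_\ell(x)\bigr)$ for a positive constant $C$, derives $\Delta_{\rm rat}(M)=\exp\bigl(C\,\bigl|\sum_{i\in J}a_i-\frac{T}{2}\bigr|\bigr)$, and concludes that the optimum is $1$ exactly for yes-instances; your extra checks (consistency with ($\star$), positivity, realizability of every $J$, and the encoding remark) are points the paper leaves implicit or covers by its value-oracle convention.
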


By setting the positive constant $C$ appropriately, we can also show the following inapproximability result. Note that the statement below for the ratio version is weaker than the corresponding statement, Corollary~\ref{cor:hardness-inapprox}, for the difference version because the former rules out constant-factor approximation algorithms, while the latter rules out even approximation algorithms whose factors may depend on the input size.

\begin{corollary}
For any fixed value $\alpha>1$, there is no polynomial-time $\alpha$-approximation algorithm for the ratio version of the Stable Matching Problem with Minimum Utility Gap in the presence of ties, unless $\mathrm{P}=\mathrm{NP}$.
\end{corollary}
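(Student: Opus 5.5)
The plan is a gap-preserving version of the reduction from \textsc{Partition} in the proof of Theorem~\ref{thm:hardness-ties}, using the exponential value functions $v_\ell(X)=\exp\bigl(C\cdot\frac{1}{|X|}\sum_{x\in X}w_\ell(x)\bigr)$ described just before the statement. The positive constant $C$ will be chosen depending only on the fixed $\alpha$.

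First I will check that the modified instance is legitimate. Since $t\mapsto\exp(Ct)$ is strictly increasing and the average-value functions satisfy ($\star$), the new $v_\ell$ also satisfy ($\star$). They are positive, so $\Delta_{\rm rat}$ is defined. The stable matchings are unchanged, because stability does not depend on the values. The structural analysis of Theorem~\ref{thm:hardness-ties} therefore still applies: each gadget is in state $0$ or state $1$. With $J=\{i\mid M(s_i)=\ell_{\mathrm{sum}}\}$, this gives
\[
\Delta_{\rm rat}(M)=\exp\Bigl(C\cdot\Bigl|\textstyle\sum_{i\in J}a_i-\frac{T}{2}\Bigr|\Bigr),
\]
and every $J\subseteq\{1,\ldots,m\}$ is realized by some stable matching.

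Next comes the gap argument. The $a_i$ are integers and $T$ is even, so $\sum_{i\in J}a_i-\frac{T}{2}$ is an integer. In a yes-instance the optimum is $1$. In a no-instance every stable matching has $\bigl|\sum_{i\in J}a_i-T/2\bigr|\geq 1$, so every feasible value is at least $e^{C}$. I will fix $C=2\ln\alpha>0$, so that $e^{C}=\alpha^2>\alpha$.

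Suppose a polynomial-time $\alpha$-approximation algorithm existed. On a yes-instance it would return a stable matching of value at most $\alpha$. On a no-instance every stable matching has value at least $\alpha^2>\alpha$. Comparing the returned value with $\alpha$ would therefore decide \textsc{Partition} in polynomial time, which is impossible unless $\mathrm{P}=\mathrm{NP}$.

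The step needing the most care is that the reduction is polynomial in the value-oracle model. Because $C$ is a constant fixed by $\alpha$, each $v_\ell$ is specified by a fixed polynomial-time computable formula. The constant $B$ only needs to be polynomially bounded, for example larger than all of $m\max_i a_i$ and $T$. Comparisons of these values reduce to comparisons of the rational exponents $C\cdot(\text{average weight})$, so the oracle can be implemented in polynomial time, and the approximation algorithm's output is checked against $\alpha$ by comparing the exponent with $\ln\alpha$.

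This also explains why only constant $\alpha$ is ruled out. To defeat an input-dependent factor $\alpha(n)$ one would need $C\gtrsim\ln\alpha(n)$. That would blow up the magnitudes $\exp(C(B+\cdot))$, and I would not pursue it here.
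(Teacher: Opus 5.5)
Your proof is correct and follows the paper's argument: you use the same exponential modification of the \textsc{Partition} reduction, and your choice $C=2\ln\alpha$ is a particular instance of the paper's condition $C>\log\alpha$, so yes-instances have optimum $1$ while no-instances have optimum at least $e^{C}>\alpha$. Your extra checks (consistency with ($\star$), integrality of $\sum_{i\in J}a_i-T/2$, and oracle implementability) make explicit what the paper leaves implicit; one small imprecision is that the exponents $C\cdot(\text{average weight})$ are not rational when $C=2\ln\alpha$, but since $C>0$ comparisons still reduce to comparing the rational averages.
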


\begin{proof}
Set the positive constant $C$ used in the above modified reduction so that $C>\log \alpha$.
Then the optimum value of our problem is $\exp(C\cdot 0)=1$ if the given \textsc{Partition} instance is a yes-instance, and otherwise it is at least $\exp(C\cdot 1)>\alpha$. Thus, an $\alpha$-approximation algorithm would distinguish yes-instances from no-instances of \textsc{Partition} in polynomial time, which is impossible unless $\mathrm{P}=\mathrm{NP}$.
\end{proof}

\begin{remark}
The transformation above also implies that the ratio version can be reduced to the difference version in the setting without ties: by taking logarithms of the values, the ratio version becomes the difference version. However, since representing logarithms may be numerically inconvenient, our direct approach, which only modifies the operation $\gap$ in the algorithm, seems preferable to such a reduction.
\end{remark}

\end{document}